\newcommand{\Clique}{\textrm{\bf CLIQUE}}
\newcommand{\PMatch}{\textrm{\bf PMATCH}}
\newcommand{\size}{\textrm{\bf Size}}
\newcommand{\depth}{\text{\bf Depth}}
\newcommand{\depthp}{\text{\bf Depth}^{+}}
\newcommand{\KW}{\text{\bf KW}}
\newcommand{\KWP}{\text{\bf KW}^{\mathbf{+}}}
\newcommand{\set}[1]{\left\{ #1 \right\}}
\newtheorem{theorem}{Theorem}
\newtheorem{claim}[theorem]{Claim}
\newtheorem{proposition}[theorem]{Proposition}
\newtheorem{definition}{Definition}
\newtheorem{corollary}{Corollary}
\newtheorem{lemma}{Lemma}
\newtheorem{remark}{Remark}
\newenvironment{proposition-a}[1]{\noindent {\bf Proposition~#1:~}\em }{\smallskip}
\newenvironment{lemma-a}[1]{\noindent {\bf Lemma~#1:~}\em }{\smallskip}
\newenvironment{theorem-a}[1]{\noindent {\bf Theorem~#1:~}\em }{\smallskip}
\title{Depth Lower Bounds against Circuits with Sparse Orientation}
\author{Sajin Koroth\thanks{Department of CSE, Indian Institute of
    Technology Madras, Chennai 600036, India.} \and Jayalal Sarma$^{*}$}
\date{\today}
\begin{document}

\maketitle

\begin{abstract} 
  We study depth lower bounds against non-monotone circuits,
  parametrized by a new measure of non-monotonicity: the
  orientation\footnote{A generalization of monotone functions are
    studied under the name {\em unate functions}(cf.~\cite{IPS97}). We
    inherit the terminology of {\em orientation} from that setting. We
    remark that our definition is universal unlike the case of unate
    functions.} of a function $f$ is the characteristic vector of the
  minimum sized set of negated variables needed in any
  DeMorgan\footnote{Circuits where negations appear only at the
    leaves.}  circuit computing $f$. We prove trade-off results
  between the depth and the weight/structure of the orientation
  vectors in any circuit $C$ computing the $\Clique$ function on an
  $n$ vertex graph. % $G(V,E)$.
  We prove that if $C$ is of depth $d$ and each gate computes a
  Boolean function with orientation of weight at most $w$ (in terms of
  the inputs to $C$), then $d \times w$ must be $\Omega(n)$. In
  particular, if the weights are $o(\frac{n}{\log^k n})$, then $C$
  must be of depth $\omega(\log^k n)$.  We prove a barrier for our
  general technique. However, using specific properties of the
  $\Clique$ function (used in \cite{amano2005superpolynomial}) and the
  Karchmer-Wigderson framework~\cite{Karchmer:1988:MCC:62212.62265},
  we go beyond the limitations and obtain lower bounds when the weight
  restrictions are less stringent. We then study the depth lower
  bounds when the structure of the orientation vector is
  restricted. Asymptotic improvements to our results (in the
  restricted setting), separates $\NP$ from $\NC$.  As our main tool,
  we generalize Karchmer-Wigderson
  game~\cite{Karchmer:1988:MCC:62212.62265} for monotone functions to
  work for non-monotone circuits parametrized by the weight/structure
  of the orientation. We also prove structural results about
  orientation and prove connections between number of negations and
  weight of orientations required to compute a function.
\end{abstract}

\section{Introduction}
Deriving size/depth lower bounds for Boolean circuits computing
$\NP$-complete problems has been one of the main goals of circuit
complexity.
By a counting
argument\cite{riordan1942number} it is known that ``almost'' all
Boolean functions require exponential size and linear depth. 
Despite many efforts the best known lower bound against an explicit
function computed by general circuits is still a constant multiple of
the number of inputs~\cite{IwamaKazuoMorizumiHiroki5nBound}.
And the best known depth lower bound for an explicit function against
general circuits of bounded fan-in is (derived
from formula size lower bound due to H\aa
stad~\cite{Hastad98theshrinkage}) less than $3 \log n$.
Attempts to prove size lower bounds against constant depth
circuits has yielded useful results (see survey~\cite{All96,All08} and
textbook~\cite{juknatext}).
%\pdfcomment{Why are we discussing constant depth lower bounds here ?} 

%  However, despite many efforts, for
% computing explicit functions, the best size lower bound known against
% general circuits is still a constant factor on the number of
% inputs~\cite{IwamaKazuoMorizumiHiroki5nBound}, and the best depth
% lower bound known against general bounded fan-in circuits is (derived
% from formula size lower bound due to H\aa
% stad~\cite{Hastad98theshrinkage}) which is less than $3 \log n$.

Notable progress has been made in proving lower bounds against
monotone circuits. Monotone circuits are circuits without negtions
gates. Such circuits can only compute monotone functions. Montone
functions are Boolean functions whose value does not decrease when
input bits are changed from $0$ to
$1$. Razborov~\cite{razborov1968lower} proved a super-polynomial size
lower bound against monotone circuits computing the $\Clique$ function
which is $\NP$-hard. This was further strengthened to an exponential
lower bound by Alon and Boppana~\cite{AB87}. A super polynomial lower
bound is known~\cite{razborovlogicalpermanent} also against monotone
circuits computing $\PMatch$ problem.  Since $\PMatch$ is known to be
in $\P$~\cite{Edmonds1965a} it has polynomial size circuits. Thus it
shows that non-monotonocity is helpful in reducing size even when the
function computed is monotone.

Moving in the direction of non-monotonicity, Amano and
Maruoka~\cite{amano2005superpolynomial} established a super-polynomial
lower bound against circuits computing the $\Clique$ function with at
most $\frac{1}{6}\log \log n$ negations. A chasm was already known at
the $\log n$ negations; Fisher~\cite{FischerLimitedNegation} proved
that any circuit of polynomial size can be converted to a circuit of
polynomial size that has only $\log n$ negations. In particular, this
implies that if we are able to extend the technique of lower bounds to
work against circuits having $\log n$ negations, then it separates
$\P$ from $\NP$.
Jukna~\cite{Jukna200471} further tightened the gap for explicit
multi-output functions by establishing a
super-polynomial size lower bound against circuits with at most $\log
n - 16 \log \log n$ negations.

%Depth lower bounds for Boolean circuits are interesting from an algorithmic point of view also. The depth lower bounds against Boolean circuits implies lower bounds for time required for parallel algorithms in the CREW-PRAM model of computation.
It is known~\cite{RazMatchingSuperLinearDepth} that both $\Clique$
function and $\PMatch$ function on $n$-vertex graphs require
$\Omega(n)$ depth when computed by bounded fan-in monotone circuits.
% In terms of depth lower bounds against bounded fan-in monotone
% circuits, it is known that $\Clique$ function and the $\PMatch$
% function on graphs of $n$ vertices require $\Omega(n)$ depth
% ~\cite{RazMatchingSuperLinearDepth}.  
Thus, non-monotonicity is useful
in the depth restricted setting also, as $\PMatch$ is known to be in
non-uniform $\NC^2$~\cite{Lov79}.  A main technique involved in the
monotone depth lower bound for
$\PMatch$~\cite{RazMatchingSuperLinearDepth} is a characterization of
circuit depth using a communication game defined between two
players. Raz and Wigderson~\cite{Raz89probabilisticcommunication} used
this framework to obtain a lower bound of $\Omega(n^2)$ on the number
of negations at the leaves for any $O(\log n)$ depth DeMorgan circuit
solving the $s$-$t$ connectivity problem.  However, we do not
know\footnote{Indeed, size lower bounds against bounded fan-in
  circuits in the presence of
  negations~\cite{amano2005superpolynomial} also imply depth lower
  bounds against them. In particular, \cite{amano2005superpolynomial}
  implies that any circuit with $\frac{1}{6}\log \log n$ negation
  gates computing $\Clique(n,(\log n)^{\sqrt{\log n}})$ requires depth
  $\Omega((\log n)^{\sqrt{\log n}})$.} any depth lower bound which uses
Karchmer Wigderson framework against
circuits where there are negations at arbitrary locations.

\subsection{Our Results}

We study an alternative way of limiting the non-monotonicity in the
circuit. To arrive at our restriction, we define a new measure called
{\em orientation} of a Boolean function. 

\begin{definition}
A Boolean function $f : \set{0,1}^n
\to \set{0,1}$ is said to have {\em orientation} $\beta \in
\set{0,1}^n$ if there is a monotone Boolean function $h: \set{0,1}^{2n} \to
\set{0,1}$ such that :
$\forall x \in \set{0,1}^n, f(x) = h ( x,(x \oplus \beta) )$.
\end{definition}

If $f$ is a monotone Boolean function, from the above defintion it is
clear that all-$0$'s vector is an orientation of $f$.  The {\em weight
  of an orientation} is simply the number of $1$'s in $\beta$, and can
be thought of as a parameter indicating how ``close'' $f$ is to a
monotone function.
Note that for any DeMorgan circuit computing $f$, the characteristic
vector of negated input indices form an orientation vector of the
function $f$.  Because replacing the negated input variables
with fresh variables results of a DeMorgan circuit results in a
monotone circuit.
% It is easy to verify that the characteristic vector
% of the subset of input indices which are negated in any DeMorgan
% circuit computing $f$ forms an orientation of $f$. 

%   The orientation of a Boolean function is simply
% the indicator vector of the set of inputs which are required to be
% negated in any DeMorgan circuit computing the function $f$. Indeed, if
% $f$ itself is monotone, the orientation is simply the all-0s vector.

The definition can be extended to circuits as well. We consider
circuits where the function computed at each gate can be non-monotone.
But each gate computes a function whose orientation (with respect to
the in inputs of the circuits) must be of limited weight.
% but the corresponding orientation (with respected to the inputs to the
% circuit) must be of limited weight. 
We say a circuit $C$ is weight $w$
oriented if every internal gate of $C$ computes a function which has
an orientation $\beta$ with $|\beta| \leq w$. 
The weight restriction on a circuit thus defined is a semantic
restriction as we are only limiting the weight of orientation of
the functions computed at sub-circuits of $C$. But we do not place any
restriction on how (especially interms of actual negation gates) the
functions at sub-circuits are computed in $C$.
%  The semantic
% restriction we study limits the weight of the orientation of the
% function computed at each gate of the circuit (in terms of the
% original inputs of the circuit).
 We prove the following theorem which
presents a depth vs weight trade-off for weight restricted circuits.

\begin{theorem}
\label{introthm:mainlb}
If $C$ is a Boolean circuit of depth $d$ and weight of the orientation
$w$ ($w > 0$), computing $\Clique$ then, $d \times w$ must be
$\Omega(n)$.
\end{theorem}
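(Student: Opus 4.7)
The plan is to develop a variant of the Karchmer--Wigderson (KW) game tailored to weight-$w$ oriented circuits, and then to lower bound its round complexity for $\Clique$. First I would define a generalized KW game for $f$: Alice holds $x$ with $f(x) = 1$, Bob holds $y$ with $f(y) = 0$, and together they output a pair $(i, \sigma)$ with $i \in [n]$ and $\sigma \in \set{+, -}$ such that either $\sigma = +$ and $x_i > y_i$, or $\sigma = -$ and $x_i < y_i$. The game is tied to the prospective oriented circuit: at each round, corresponding to a gate $g$ that the players traverse, the orientation $\beta_g$ of the function computed at $g$ (with $|\beta_g| \leq w$) restricts admissible ``negative'' witness coordinates to $\text{supp}(\beta_g)$.

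Next, I would establish a KW-style correspondence: a depth-$d$, weight-$w$ oriented circuit for $f$ yields a $d$-round protocol for this game. At an internal gate $g$ computing $f_g(x) = h_g(x, x \oplus \beta_g)$ with $h_g$ monotone, since Alice's lifted input $(x, x \oplus \beta_g)$ satisfies $h_g = 1$ while Bob's $(y, y \oplus \beta_g)$ satisfies $h_g = 0$, one player selects a child gate by standard monotone KW navigation using a single bit. On reaching a leaf (a literal in DeMorgan form) the players read off the witness; a negated literal yields $\sigma = -$, and its index must then lie in $\bigcup_g \text{supp}(\beta_g)$ over the gates $g$ on the traversed root-to-leaf path, a set of size at most $d \cdot w$.

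The third, and main, step is lower bounding this generalized game for $\Clique$. Following Raz--Wigderson-style instance distributions, I would take Alice's input to be a $k$-clique $C \subseteq [n]$ and Bob's to be the complete $(k-1)$-partite graph induced by a coloring $c : [n] \to [k-1]$, for a suitable $k$. Positive witnesses are then monochromatic edges inside $C$, and producing these requires $\Omega(n)$ communication in the corresponding monotone KW game. To handle negative witnesses, I would argue that when $d w = o(n)$, the per-path bound of $d w$ on admissible negative-witness indices is too small, via an averaging or pigeonhole argument across the instance distribution, to cover enough instances; hence the protocol must output positive witnesses on a significant fraction and thereby inherit the $\Omega(n)$ monotone bound, yielding $d \cdot w = \Omega(n)$.

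The main obstacle will be this last step: converting the path-local bound $|\bigcup_g \text{supp}(\beta_g)| \leq d w$ on negative-witness indices into a global obstruction. Different executions traverse different paths with potentially disjoint orientation supports, so no single small ``forbidden set'' exists. A careful counting or reduction argument, perhaps showing that a depth-$d$ weight-$w$ oriented protocol for $\Clique$ can be transformed into a depth-$O(d)$ monotone KW protocol on an $(n - O(d w))$-vertex subinstance by fixing the potential negative-witness coordinates, seems necessary to bridge the per-path bound to the known monotone lower bound.
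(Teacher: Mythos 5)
Your plan correctly identifies a Karchmer--Wigderson-style game as the tool and correctly observes that the potential ``bad'' indices (where a negated literal could produce a negative witness) live in the orientation supports along the traversed root-to-leaf path. However, your Step~3 has a genuine gap, and it is precisely the gap you yourself flag: the set of candidate negative-witness indices is path-dependent and thus input-dependent, so there is no small global forbidden set. Neither averaging across a hard distribution nor ``fixing the potential negative-witness coordinates'' in advance will work, because the union of $\text{supp}(\beta_g)$ over all root-to-leaf paths may cover every coordinate, and because the protocol is free to choose its path adaptively. Your proposed reduction to a monotone KW instance on $n - O(dw)$ vertices presupposes a fixed set of coordinates to remove, which you do not have.

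The paper's proof resolves exactly this obstruction by a different mechanism that your sketch is missing: rather than running the standard one-bit-per-round KW navigation and hoping the leaf turns out positive, the players spend $O(w)$ bits at \emph{each} gate to reveal the values of the current inputs on the (at most $4w$) coordinates indexed by the orientations of the two children (and, when a child is a negation gate, of its input too). If this comparison surfaces a monotone witness $x'_i = 1, y'_i = 0$ on those coordinates, the protocol halts. Otherwise, the receiving player \emph{overwrites} their input on those coordinates to agree with the sender's, which preserves the side condition $f_1(x')=1$, $f_2(y')=0$ or vice versa and establishes an invariant: whenever the protocol sits at a gate $g$ with orientation $\beta_g$, the current $x'$ and $y'$ agree on $\text{supp}(\beta_g)$. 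This invariant is the crux: at any negated-literal leaf $\bar{x_i}$, the orientation has $\beta_i = 1$, so $x'_i = y'_i$, contradicting the separation; hence the protocol is guaranteed to terminate at a positive literal. This turns the oriented circuit directly into a protocol for the \emph{monotone} game $\KWP(\Clique)$ at a cost of $4w+1$ bits per round, yielding $d(4w+1) \geq \KWP(\Clique) = \Omega(n)$ by Raz--Wigderson. The trade-off here is: your protocol is cheaper per round (one bit) but has an unresolved correctness/lower-bound problem with negative witnesses; the paper's protocol is more expensive per round ($O(w)$ bits) but produces only positive witnesses and so reduces cleanly to the known monotone bound. To repair your plan you would need exactly this input-overwriting invariant, at which point your Step~3 disappears and you land on the paper's argument.
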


In particular, if the weights are $o(\frac{n}{\log^k n})$, the
$\Clique$ function requires $\omega(\log^k n)$ depth. By contrast, any
circuit computing $\Clique$ has weight of the orientation at each gate
at most $n^2$. We prove the above theorem by extending the
Karchmer-Wigderson framework for monotone circuit depth to the case of
non-monotone circuits which are ``sparsely oriented''.  The proof
depends critcally on the route to monotone depth via
Karchmer-Wigderson games. This is because it is unclear how to
directly simulate weight $w$-restricted circuit model using a monotone
circuit for $w >0$. We remark that the above theorem applies even to
circuits computing $\PMatch$.

The difficulty in extending the above lower bound to more general
lower bounds is the potential presence of gates computing ``densely''
oriented functions. In this context, we explore the usefulness of
gates with non-zero orientation in a circuit. We argue that allowing
even a constant number of non-zero (but ``dense'') weight of
orientation gates can make the circuit more powerful in the limited
depth setting.  In particular, we show (see Theorem~\ref{thm:2dense})
that:
\begin{theorem}
  There exists a monotone Boolean function $f$ which cannot be
  computed by poly-log depth monotone circuits, but there is a
  poly-log depth circuit computing $f$ such that there are at most two
  internal gates whose weight of orientation is non-zero. 
\end{theorem}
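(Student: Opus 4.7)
The natural candidate for $f$ is the perfect matching function $\PMatch$ on an $n$-vertex graph. It is monotone in the edge variables, and by the Raz--Wigderson theorem cited earlier it requires $\Omega(n)$ depth in any bounded fan-in monotone circuit; in particular, no polylog depth monotone circuit computes it. Thus the first half of the statement holds for $f = \PMatch$.

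For the second half, $\PMatch$ is in non-uniform $\NC^2$ by Lovász's algorithm, so some polylog depth (non-monotone) circuit exists. The task is to reshape such a circuit so that non-monotonicity, in the sense of non-zero orientation weight, is confined to at most two internal gates. The structural observation I would exploit is that orientation is measured relative to the circuit's global inputs, not a gate's immediate children: a gate may compute a monotone function of the inputs even while combining non-monotone intermediates, as long as the non-monotonicity cancels at its output. I would therefore try to identify two pivot gates $G_1, G_2$ that carry all the non-monotonicity, while every other internal gate computes a function that is monotone in the original inputs. Since $f$ is monotone, the output gate is automatically monotone.

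The hard part is constructing this two-pivot circuit explicitly. The textbook $\NC^2$ algorithms for $\PMatch$, whether via the Tutte matrix determinant or the isolation lemma, produce many non-monotone intermediate values, and collapsing all of them into just two pivots is not immediate. A workable fallback, if $\PMatch$ itself does not obviously admit such a decomposition, is to engineer a specific monotone function $f$ tailored to the two-pivot structure, for instance by composing a hard monotone function with a short non-monotone selector, and then argue separately, e.g.~by reducing $\PMatch$ or $\Clique$ to $f$, that $f$ still evades polylog depth monotone circuits. Once the construction is in hand, verification reduces to checking that each non-pivot internal gate indeed computes a monotone function of the inputs, so that its orientation vector is the all-zeros vector.
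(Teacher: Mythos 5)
Your first plan --- taking $f = \PMatch$ itself and reshaping its $\NC^2$ circuit so that only two gates carry non-zero orientation --- is not what the paper does, and there is a concrete obstacle to it: the paper's Theorem~\ref{thm:negations-to-orientation} converts a $t$-negation circuit into one with roughly $2^{t-1}(t+2)$ gates of non-zero orientation. By Fischer you may take $t = \log n$, but then you get on the order of $n\log n$ dense gates, not two. There is no known way to collapse the non-monotonicity of a generic $\log n$-negation $\NC^2$ circuit for $\PMatch$ into two pivots, and the paper in fact explicitly remarks that the function it produces is \emph{not} $\PMatch$ and is not explicit. So your main route stalls at exactly the step you flag as ``the hard part.''

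Your fallback is closer to the paper's argument in spirit, but you are missing the two ideas that make it go through. First, the paper does not look for a monotone function with a two-pivot circuit directly; it looks for a monotone function that admits a polylog-depth circuit with just \emph{one} negation while still lacking polylog-depth monotone circuits, and then invokes Theorem~\ref{thm:negations-to-orientation} with $t=1$ to convert that single negation into at most two gates of non-zero orientation. Second, and this is the crux, the paper finds such a function by an iterative negation-elimination argument: starting from a polylog-depth circuit for $\PMatch$ with $t$ negations, it singles out the first negation gate's input $g_1$, defines $f'_0 = f_0 \vee g_1$ and $f'_1 = f_1 \wedge g_1$ (the two branches with $g_1$ hard-wired, monotonized by or-ing/and-ing with $g_1$), shows both are monotone and have $(t-1)$-negation polylog-depth circuits, and argues that at least one of them still has no polylog-depth monotone circuit --- otherwise you could stitch them back together with a single negation and contradict the assumption. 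Iterating at most $t \le \log n$ times descends to a monotone function needing exactly one negation, which is where Theorem~\ref{thm:negations-to-orientation} finishes. Your proposal gestures at ``composing a hard monotone function with a short non-monotone selector,'' which is the right shape, but without the one-negation target, the $f'_0/f'_1$ decomposition, and the case analysis showing hardness survives at least one branch, the argument does not close.
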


We note that the function in Theorem~\ref{thm:2dense} is derived as a
restriction from the non-uniform $\NC^2$ circuit computing $\PMatch$
and hence is {\em not explicit}.  The above theorem indicates that the
densely oriented gates are indeed useful, and that
Theorem~\ref{introthm:mainlb} cannot be improved in terms of the
number of densely oriented gates it can handle, without using specific
properties about the function (for example, $\Clique$) being computed.

Going beyond the above limitation, we exploit the known properties of
the $\Clique$ function and use the generalized Karchmer-Wigderson
games to prove lower bounds against circuits with less stringent
weight restrictions (in particular, we can restrict the weight
restrictions to only negation gates and their inputs)
%\pdfcomment{This is very vague and imprecise}.
\begin{theorem}
\label{introthm:am}
For any circuit family $\mathcal{C}=\set{C_m}$ (where $m = {n \choose
  2}$) computing $\Clique(n,n^{\frac{1}{6\alpha}})$ with $\ell+k$
negation gates, where $\ell \leq 1/6 \log \log n$,
$\alpha=2^{\ell+1}-1$, at least $k$ negation gates are computing
functions which are sensitive only on $w$ inputs\footnote{i.e., the
  weight of orientation of the function computed at their input plus
  orientation of the function computed at their output is at most $w$}
with $kw \leq \frac{n}{8}$
and the remaining $\ell$ negations compute functions of arbitrary
orientation: $\depth(C_m) \geq n^{\frac{1}{ 2^{\ell+8}}}$
\end{theorem}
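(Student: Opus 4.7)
The plan is to combine the restriction-style argument pioneered by Amano and Maruoka~\cite{amano2005superpolynomial} with the generalized Karchmer-Wigderson game of Theorem~\ref{introthm:mainlb}. The intuition is that the $\ell$ negations with unrestricted orientation are ``expensive'' to track but there are few of them, so they can be killed by a restriction; the remaining $k$ negations are many but each has low orientation weight, so they fall within the reach of the sparse-orientation KW framework already developed in the paper.

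First I would use an Amano-Maruoka style restriction to eliminate the $\ell$ unrestricted negations. Their argument exploits the fact that along a chain of inputs ordered by coordinate-wise dominance, each gate of a $t$-negation circuit can flip its output value at most $2^t$ times. Choosing a chain of nested near-cliques tailored to $\Clique(n, n^{1/(6\alpha)})$, and applying a pigeonhole argument over the $2^\ell$ possible ``sign patterns'' of the $\ell$ arbitrary-orientation negations on this chain, one extracts a sub-instance of $\Clique$ on roughly $n' \approx n^{1/\alpha}$ vertices with the clique parameter essentially preserved, on which all $\ell$ unrestricted negations evaluate to fixed constants. Hard-wiring those constants produces a residual circuit whose only remaining negations are the $k$ sparsely oriented ones.

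Next, on this residual circuit computing $\Clique$ on the $n'$-vertex sub-instance, I would invoke the extended Karchmer-Wigderson framework from Theorem~\ref{introthm:mainlb}. Because each of the $k$ surviving negations has orientation weight on its input plus output bounded by $w$, and all other gates are monotone relative to the input (after a suitable relabelling), the associated KW-style communication game admits a protocol charging $O(\log w)$ bits per sparse negation encountered along a root-to-leaf path and $O(1)$ bits per monotone gate. This yields depth $\Omega(n'/w)$ on the restricted circuit. Substituting $n' \gtrsim n^{1/\alpha}$ with $\alpha = 2^{\ell+1}-1$ and using the hypothesis $kw \leq n/8$, the arithmetic is engineered to collapse to $\depth(C_m) \geq n^{1/2^{\ell+8}}$.

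The main obstacle I expect is making the restriction step truly compatible with the weight bound: the restriction must simultaneously force all $\ell$ unrestricted negations to constants, preserve enough of the $\Clique(n, n^{1/(6\alpha)})$ structure on the residual sub-graph to retain hardness, and leave the orientation weight of each of the $k$ surviving negations at most $w$. The last point is delicate because the orientation of a gate is a global object defined via a monotone extension in $2n$ variables, and one must argue that a chain-restriction only refines---never inflates---the witnessing monotone function, so that the sparse-orientation guarantee transfers to the restricted circuit in a form the KW protocol can exploit. Balancing the three parameter budgets---the chain-depth parameter $\alpha$, the clique parameter $n^{1/(6\alpha)}$, and the product $kw$---so that the final depth bound comes out to exactly $n^{1/2^{\ell+8}}$ is the technical heart of the calculation.
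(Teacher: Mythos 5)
Your plan reverses the order of the two steps that the paper uses, and in the reversed order both steps break.

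The paper's proof (Lemma~\ref{lemma:limitednegationsarbitorientaiton} plus Lemma~\ref{lemma:depthnegtradeoff}) kills the $k$ \emph{sparse} negations first and the $\ell$ arbitrary ones last, not the other way around. Because each of the $k$ sparse negations is sensitive on at most $w$ edges, their combined support spans at most $2kw \leq n/4$ vertices; call this vertex set $S$. Setting to $0$ every edge inside $\binom{S}{2}$ and every edge between $S$ and its complement is a \emph{single, explicit restriction} that forces all $k$ sparse negations to constants while leaving a clean $\Clique\left(\frac{3n}{4}, n^{1/(6\alpha)}\right)$ instance on the surviving $3n/4$ vertices. The residual circuit therefore has only the $\ell$ unrestricted negations left, and one invokes the Amano--Maruoka depth lower bound $\depth_\ell\left(\Clique\left(m, m^{1/(6\alpha)}\right)\right) \geq m^{1/(81\alpha)}$ directly on it. That is the whole proof; no modified Karchmer--Wigderson game is used in this theorem at all.

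Your proposed order has concrete gaps. First, the Amano--Maruoka argument is \emph{not} a restriction argument: it is a boundary-graph covering argument (Lemma~\ref{lemma:boundarycovering} in the appendix), and it does not hand you a sub-instance on $n^{1/\alpha}$ vertices on which the $\ell$ negations are fixed. The ``chain of nested near-cliques / pigeonhole over sign patterns'' device you describe is closer to Markov-style decrease arguments and is not what~\cite{amano2005superpolynomial} does, nor is it clear it can be made to preserve the $\Clique$ structure here. Second, even if such a restriction were available, Theorem~\ref{introthm:mainlb} requires \emph{every} gate to compute a function of orientation weight at most $w$, whereas the residual circuit you obtain is only guaranteed to have the $k$ \emph{negation} gates sparse; other gates may have arbitrary orientation. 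Your stated protocol cost of $O(\log w)$ per sparse negation is also not what the paper's protocol achieves --- Algorithm~\ref{algo:protocolKWModified} transmits the restriction of the inputs to the orientation support, costing $4w+1$ bits per gate, not $O(\log w)$ per negation. Third, the arithmetic does not close: with a residual universe $n' \approx n^{1/\alpha}$ the sparse-orientation bound gives depth roughly $n'/w$, and since the hypothesis allows $w$ as large as $n/(8k)$, the quantity $n^{1/\alpha}/w$ can be $o(1)$, nowhere near $n^{1/2^{\ell+8}}$. The hypothesis $kw \leq n/8$ is meant to bound the vertex support $2kw \leq n/4$ so that the deletion-restriction leaves $3n/4$ vertices; it is not meant to normalize a $w$-dependent KW cost.
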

This theorem implies that $\Clique$ cannot be computed by circuits
with depth $n^{o(1)}$ even if we allow some constant number of gates
to have non-zero (even dense) orientation - thus going beyond the
earlier hurdle presented for $\PMatch$. We remark that the above
theorem also generalizes the case of circuits with negations at the leaves
($\ell = 0$, and $w=1$).

% As far as we know, the above theorem is the first instance of a
% lower bound which combines the approximation method with the
% Karchmer-Wigderson games. It also 
It gives hope that by using
properties of $\Clique$ (like hardness of
approximation~\cite{AmanoMaruokaPotentialOfApproximation} used by
~\cite{amano2005superpolynomial}) one can possibly push the technique
further.

We also explore the question of the number of
densely oriented gates that are required in an optimal depth
circuit. We establish the following connection to the number of
negations in the circuit.

\begin{theorem}\label{thm:negations-to-orientation}
  For any circuit $C$ with $t$ negations, there is a circuit $C'$
  computing the same function such that $\size(C') \leq 2^t \times
  (\size(C) + 2^t) + 2^t$, and there are at most $2^{t-1} (t + 2) - 1$
  internal gates whose orientation is a non-zero vector.
\end{theorem}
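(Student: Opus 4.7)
I would prove the statement by converting $C$ into $C'$ using a two-step structure: produce $2^t$ monotone copies of $C$, one per possible setting of the outputs of the $t$ negation gates, and add a small ``control'' subcircuit that selects the correct copy according to the true values of the negations.

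Let $N_1,\ldots,N_t$ be the negations of $C$ in topological order, and let $g_i$ be the function computed at $N_i$'s input. For each $y=(y_1,\ldots,y_t)\in\set{0,1}^t$, let $C_y$ denote the monotone circuit obtained from $C$ by replacing $N_i$'s output with the constant $y_i$, and set $h(x,y):=C_y(x)$, which is monotone in $x$ and in $y$ (the $y_i$'s replace negation outputs and therefore occur only positively after substitution). Define $y^*(x)$ inductively by $y^*_i(x):=\neg g_i^{(y^*_1(x),\ldots,y^*_{i-1}(x))}(x)$, where $g_i^{(y_1,\ldots,y_{i-1})}(x)$ is the monotone function computed at $N_i$'s input wire inside $C_y$ (it depends only on $y_1,\ldots,y_{i-1}$). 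A direct induction shows $C(x)=h(x,y^*(x))=C_{y^*(x)}(x)$.

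The circuit $C'$ consists of: (i) the $2^t$ monotone copies $C_y$, with their internal wires (notably every $g_i^{(y_1,\ldots,y_{i-1})}(x)$) available for reuse; (ii) for each $i$, a MUX tree over the $2^{i-1}$ candidate values of $g_i^{(\cdot)}(x)$ controlled by $y^*_1,\ldots,y^*_{i-1}$, followed by a NOT gate to produce $y^*_i(x)$; and (iii) a final MUX tree over the $2^t$ values $C_y(x)$ controlled by $y^*_1,\ldots,y^*_t$, producing $C(x)$. Because $h$ and every $g_i^{(\cdot)}$ are monotone in their $y$-parameters, the two data inputs of every MUX are comparable, and each MUX can be implemented as the two-gate ``monotone MUX'' $a\lor(s\land b)$ with $a\leq b$. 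A size count gives $\size(C')\leq 2^t\cdot\size(C)+O(2^t)$, inside the bound $2^t(\size(C)+2^t)+2^t$; and since all gates inside the $C_y$'s have zero orientation, non-monotone gates can appear only among the $t$ NOTs and the MUX trees. Summing the $t$ selector-computation trees of sizes $2^{i-1}-1$ and the final tree of size $2^t-1$, with two non-monotone gates per monotone MUX, yields a total that fits within $2^{t-1}(t+2)-1$ after the output gate of $C'$ is excluded from the internal count.

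The main obstacle I expect is the tight accounting that hits the exact bound $2^{t-1}(t+2)-1$ rather than a looser $O(t\cdot 2^t)$. One must verify that every MUX's data inputs are truly comparable (so that the two-gate monotone MUX suffices and no MUX ever costs four non-monotone gates), that the retained internal wires of the $C_y$'s genuinely provide every candidate monotone subfunction without further overhead, and that the output gate of $C'$ is correctly excluded from the ``internal'' count. The monotonicity of the hull $h$ in the $y$-coordinates---inherited from the fact that each $y_i$ substitutes for the output of a negation and hence appears only positively after substitution---is the structural property that makes the whole construction go through.
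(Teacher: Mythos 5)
Your construction does correctly compute $C(x)$, and the key observation---that $h(x,y)=C_y(x)$ and every $g_i^{(\cdot)}$ are monotone in the $y$-coordinates, so the two data inputs of each MUX are comparable and the two-gate monotone MUX $a\lor(s\land b)$ suffices---is the right structural fact. But you have chosen a genuinely different decomposition from the paper's: the paper writes $f$ as a $2^t$-term sum-of-products,
$f(x)=\bigvee_{b\in\set{0,1}^t}\bigl(\bigwedge_{i=1}^{t} g_{i,b}^{b_i}(x)\bigr)\wedge C''_n(x,b)$,
ANDing each monotone copy with an indicator of the corresponding negation pattern and OR-ing the $2^t$ terms, whereas you build a cascade of binary MUX trees that inductively recover the true negation outputs $y^*_i(x)$ and finally select the matching monotone copy. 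The approaches have the same high-level idea (a case split over the $2^t$ negation patterns), but the selection mechanism is different.

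The gap is the gate count, which you assert but do not verify. Counting directly from your description: there are $\sum_{i=1}^{t}(2^{i-1}-1)+(2^t-1)=2^{t+1}-t-2$ MUXes, each contributing two gates whose functions can be non-monotone (the selector $y^*_j$ is the output of a NOT and hence non-monotone, so both $s\land b$ and $a\lor(s\land b)$ are generically non-monotone in $x$), plus $t$ NOT gates, for a total of $2^{t+2}-t-4$ potentially non-zero-orientation gates, or $2^{t+2}-t-5$ after excluding the output gate. This is strictly larger than $2^{t-1}(t+2)-1$ for $t\in\{2,3,4,5\}$ (for example $9$ versus $7$ at $t=2$, and $24$ versus $19$ at $t=3$); the inequality $(6-t)2^{t-1}\le t+4$ that you would need fails in exactly that range. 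To rescue the claimed bound you would have to either identify specific MUX gates that provably have zero orientation---which is not generically true, since the intermediate partial-selection functions inherit the non-monotonicity of the $y^*_j$'s---or alter the construction. The size bound and correctness are fine; the obstacle you yourself flagged (the tight accounting hitting $2^{t-1}(t+2)-1$) is precisely where the argument is incomplete, and ``yields a total that fits within'' needs to be replaced by an explicit computation, which as written does not close.
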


Next we study circuits where the structure of the orientation is
restricted.
% We now turn to circuits where the structure of the orientation is
% restricted. 
The restriction is on the number of vertices of the input
graph involved in edges indexed by the orientation vector of the
function.

\begin{theorem}\label{thm:vertex-index-beta}
  If $C$ is a circuit computing the $\Clique$ function and for each
  gate $g$ of $C$, the number of vertices of the input graph involved
  in edges indexed by $\beta_g$ (the orientation vector of gate $g$)
  is at most $w$, then $d \times w$ must be $\Omega(\frac{n}{\log
    n})$.
\end{theorem}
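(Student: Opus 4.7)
The plan is to adapt the generalized Karchmer-Wigderson framework developed for Theorem~\ref{introthm:mainlb} so that it exploits the vertex-level (rather than edge-level) sparsity of the orientation vectors. For a circuit $C$ of depth $d$ purportedly computing $\CLIQUE$, I would set up the communication game in which Alice receives a graph $X$ containing the target clique, Bob receives a graph $Y$ without one, and they jointly traverse $C$ from the output down to an input edge $e$ with $X_e = 1, Y_e = 0$.

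The traversal is the generalized KW descent used for the main theorem: at each internal gate $g$ with orientation $\beta_g$, the players perform the monotone KW step on the lifted monotone function $h_g$ evaluated at the doubled inputs $(X, X \oplus \beta_g)$ and $(Y, Y \oplus \beta_g)$, and descend into an appropriate child. The novelty here is the per-level cost accounting. By hypothesis, all edges indexed by $\beta_g$ lie within a vertex subset $V_g \subseteq [n]$ of size at most $w$, and the circuit (hence $V_g$) is known to both players in advance. The flip bits at gate $g$ are therefore determined entirely by the restriction of the input to edges inside $V_g$, and the descent step only needs the players to locate a coordinate of disagreement within an $O(w \log n)$-sized encoding of this restricted structure (naming $O(w)$ vertex labels, each of cost $O(\log n)$), rather than across the full $\binom{n}{2}$-sized adjacency vector.

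Summing across the $d$ levels yields a KW protocol for $\CLIQUE$ of total communication $O(d \cdot w \log n)$. Appealing to the known $\Omega(n)$ lower bound on the communication complexity of the $\CLIQUE$ KW game (via the standard reduction from $\DISJ$ as in Raz-Wigderson and Goldmann-H\aa stad), we conclude $d \cdot w \log n = \Omega(n)$, i.e.\ $d \cdot w = \Omega(n / \log n)$, as claimed.

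The main obstacle I anticipate is formalising the per-level cost bound cleanly: I expect to need a careful charging argument showing that the extra information transmitted at each descent step is genuinely $O(w \log n)$ bits and not $O(w^2)$ bits. Simply invoking Theorem~\ref{introthm:mainlb} with the trivial bound $|\beta_g| \le \binom{w}{2}$ only delivers $d \cdot w^2 = \Omega(n)$, so the improvement to $d \cdot w = \Omega(n / \log n)$ really does require exploiting that the non-monotone support is confined to an explicit small vertex set and not merely to a small set of edges of unrestricted shape; the hope is that encoding the relevant bits of $X$ and $Y$ on $V_g$ via vertex labels (each of length $\log n$) gives the tight bound.
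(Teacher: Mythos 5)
Your proposal follows the same general route as the paper's proof: adapt the generalized KW protocol so that the per-round cost is $O(w \log n)$ rather than $O(|\beta_g|)$, and then invoke the $\Omega(n)$ lower bound on $\KWP(\Clique)$. You also correctly diagnose the crux: one must show the exchange at each gate can be compressed to $O(w \log n)$ bits even though there can be $\Theta(w^2)$ edges inside $V_g$, since invoking Theorem~\ref{introthm:mainlb} via the trivial bound $|\beta_g| \le \binom{w}{2}$ only yields $d \cdot w^2 = \Omega(n)$.

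However, you leave exactly that crux unresolved, and in fact as stated it cannot work. You ``hope'' that naming $O(w)$ vertex labels of $O(\log n)$ bits each encodes the relevant restriction of $X$ and $Y$ to $\binom{V_g}{2}$, but for arbitrary $X \in f^{-1}(1)$ and $Y \in f^{-1}(0)$ this is false: a generic assignment to the $\binom{w}{2}$ edges within $V_g$ simply has no $O(w\log n)$-bit encoding, so no charging argument can rescue it. The missing idea, which the paper uses, is to first restrict the $\KWP(\Clique)$ game to min-term/max-term pairs without loss of generality: Alice holds a $k$-clique, so her edge-restriction to $\binom{V_g}{2}$ is fully determined by the $\le w$-bit indicator of which vertices of $V_g$ lie in her clique; and Bob holds a complete $(k-1)$-partite graph, so his edge-restriction to $\binom{V_g}{2}$ is fully determined by sending, for each vertex of $V_g$, the $O(\log n)$-bit label of its color class. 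Only after restricting to these canonical hard instances does the ``vertex-label'' encoding become lossless (note the cost is even asymmetric: $O(w)$ for Alice and $O(w\log n)$ for Bob). With that structural step added, your cost accounting is exactly the paper's argument and the conclusion $d \cdot w = \Omega(n/\log n)$ follows.
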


% We also study a sub-class of the above circuits for which we prove
% lower bound results very close to the required
% \pdfcomment{what is
%   meant by required ?} ones.
We also study a sub-class of the above circuits for which we prove
better lower bounds.  A circuit is said to be of \textit{uniform
  orientation} if there exists a single orientation vector $\beta \in
\set{0,1}^n$ such that every gate in it computes a function for which
$\beta$ is an orientation vector.

\begin{theorem}\label{thm:uniform-beta-lb}
  Let $C$ be a circuit computing the $\Clique$ function, with uniform
  orientation $\beta \in \{0,1\}^n$ such that there
  is a subset of vertices $U$, $|U| \ge \log^{k+\epsilon} n$ for which
  $\beta_e = 0$ for all edges $e$ within $U$, then $C$ must have depth
  $\omega(\log^k n)$.
\end{theorem}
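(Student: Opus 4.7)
The plan is to use a restriction argument that reduces Theorem~\ref{thm:uniform-beta-lb} to a monotone depth lower bound for $\Clique$ on $|U|$ vertices. The key observation is that on inputs supported within $U$, the uniform orientation together with $\beta_e = 0$ for edges $e \subseteq U$ renders every gate of $C$ semantically monotone in the surviving variables, so that the generalized Karchmer-Wigderson game of $C$ collapses to the standard monotone one on the induced instance.

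Concretely, I would fix the restriction $\rho$ that sets $x_e = 0$ for every edge $e$ incident to at least one vertex outside $U$. Under $\rho$, the circuit $C$ computes $\Clique$ on graphs supported on $U$, which is precisely $\Clique$ on $|U|$ vertices. For each gate $g$, writing its output as $h_g(x, x \oplus \beta)$ with $h_g$ monotone, the restriction sends $(x \oplus \beta)_e$ to $x_e$ for $e \subseteq U$ (since $\beta_e = 0$) and to the constant $\beta_e$ for $e \not\subseteq U$ (since $x_e = 0$); hence $f_g|_\rho$ is monotone in the free variables $\set{x_e : e \subseteq U}$. Using the generalized Karchmer-Wigderson framework developed earlier in the paper, the depth-$d$ circuit $C$ yields a depth-$d$ protocol for the generalized KW game of $\Clique$. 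On pairs $(a,b)$ supported on $U$, no edge outside $U$ can be output (as $a_e = b_e = 0$ there), and every edge within $U$ satisfies $\beta_e = 0$, so the relation collapses to the monotone one requiring $a_e = 1$ and $b_e = 0$. This yields a depth-$d$ monotone KW protocol, and hence a monotone circuit of depth $d$, for $\Clique$ on $|U|$ vertices.

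Finally, I would invoke a known monotone depth lower bound of $\Omega(m^c)$ for $\Clique$ on $m$ vertices with $c = 1-o(1)$ (via Raz-Wigderson style monotone KW arguments). Setting $m = |U| \geq \log^{k+\epsilon} n$ and choosing $c$ close enough to $1$ that $c(k+\epsilon) > k$ (possible for any $\epsilon > 0$) yields $d \geq \log^{c(k+\epsilon)} n = \omega(\log^k n)$. The main obstacle is securing a sufficiently sharp monotone depth lower bound: the gap between $\log^{k+\epsilon} n$ and $\log^k n$ is only polylogarithmic, so a near-linear-in-$m$ monotone bound is needed to absorb the slack $\epsilon$. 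The uniform-orientation hypothesis is also essential, since without it the restricted circuit need not become monotone.
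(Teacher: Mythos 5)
Your restriction is the wrong one, and the bug is fatal. You set $x_e = 0$ for every edge incident to a vertex outside $U$, leaving only the edges within $U$ alive. But the function being computed is $\Clique(n, n/2)$, and $|U| = \log^{k+\epsilon} n \ll n/2$, so on graphs supported entirely inside $U$ the answer is \emph{identically zero}. Your restricted circuit therefore computes the constant-$0$ function, which has depth $0$ monotone circuits, and no contradiction with any monotone lower bound can be extracted. The claim that "the circuit $C$ computes $\Clique$ on graphs supported on $U$, which is precisely $\Clique$ on $|U|$ vertices" conflates the induced graph with the induced \emph{Clique} instance; the threshold $n/2$ does not scale down with the restriction.

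The paper's restriction is chosen precisely to avoid this: it plants a fixed $K_{\frac{n}{2} - \frac{|U|}{2}}$ on a subset of $[n]\setminus U$, zeroes out the remaining edges inside $[n]\setminus U$, and sets \emph{all} edges between $U$ and $[n]\setminus U$ to $1$. Under this restriction an $n/2$-clique exists iff the $U$-induced subgraph contains a $|U|/2$-clique, so the restricted circuit computes $\Clique(|U|, |U|/2)$, which is a genuinely hard target. Since every edge with $\beta_e = 1$ has an endpoint outside $U$ and is thus fixed to a constant, and the orientation is uniform, the restricted circuit is monotone (negation gates collapse to constants via Lemma~\ref{lemma:NegationSensitivity}), and the Raz--Wigderson $\Omega(|U|)$ monotone depth bound applies directly. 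Your monotonicity analysis of $h_g(x, x\oplus\beta)$ under a restriction is sound and essentially matches the paper's reasoning, and your worry about needing a sharp monotone bound is unfounded (Raz--Wigderson already gives $\Omega(m)$, i.e.\ $c=1$); the only missing idea is the correct choice of restriction that makes the surviving Clique instance nontrivial.
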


We remark that a DeMorgan circuit has an orientation of weight exactly
equal to the number of negated variables. However, this result is
incomparable with that of \cite{Raz89probabilisticcommunication}
against DeMorgan circuits for two reasons : (1)~this is for the
$\Clique$ function. (2)~the lower bounds and the class of circuits are
different.

In contrast to the above theorem, we show that an arbitrary circuit
can be transformed into one having our structural restriction on the
orientation with $|U| = O(\log^k n)$.
\begin{theorem}
\label{thm:betareduction}
If there is a circuit $C$ computing $\Clique$ with depth $d$ then for
any set of $c\log^k n$ vertices $U$, there is an equivalent circuit
$C^{'}$ of depth $d+ c \log^k n$ with orientation $\beta$ such that none
of the edges $e(u,v),\, u,v \in U$ has $\beta_{e(u,v)}=1$.
\end{theorem}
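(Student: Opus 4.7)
The plan is to rewrite $\Clique$ as an explicit disjunction over how a potential $k$-clique intersects $U$. For each $T \subseteq U$, let $z^T \in \set{0,1}^{E(U)}$ be the setting that assigns $z^T_e = 1$ iff $e \in E(T)$, so $z^T$ hard-codes the edges inside $U$ to be exactly the clique on $T$. I propose to build $C'$ from the identity
\[
\Clique(G) \;=\; \bigvee_{T \subseteq U} \Bigl( \bigwedge_{e \in E(T)} x_e \Bigr) \wedge C\bigl(x|_{E \setminus E(U)},\, z^T\bigr),
\]
where $C\bigl(x|_{E \setminus E(U)},z^T\bigr)$ denotes the sub-circuit obtained from $C$ by hard-wiring its $E(U)$-inputs according to $z^T$. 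The critical structural choice is that the outer disjunction is indexed by subsets of $U$ (of which there are $2^{|U|}$) rather than by all assignments to $E(U)$ (of which there are $2^{\binom{|U|}{2}}$).

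For correctness, in the forward direction any clique $K$ of size $k$ in $G$ satisfies the term for $T := K \cap U$: the conjunction holds because $T$ is a clique in $G$, and $K$ remains a clique in the graph that agrees with $G$ outside $E(U)$ and has edges exactly $E(T)$ inside $U$. For the converse, if some $T$-term evaluates to $1$, then $T$ is a clique in $G$ and the modified graph contains a $k$-clique $K$; since the edges in $E(U) \setminus E(T)$ are set to $0$, we must have $K \cap U \subseteq T$, so all intra-$U$ edges of $K$ lie in $E(T) \subseteq E(G)$ and the remaining edges of $K$ are untouched, yielding a $k$-clique in $G$.

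For the depth bound, each restricted copy $C(\cdot, z^T)$ has depth at most $d$; the $\bigwedge_{e \in E(T)} x_e$ has depth $O(\log |U|)$; combining them through a single $\wedge$ (placing the deeper sub-circuit near the root of the $\wedge$-tree) gives depth $d + 1$; and the outer $\bigvee$ over the $2^{|U|}$ subsets contributes $|U|$ more levels, for a total of $d + c\log^k n + O(1)$. For the orientation requirement, I verify that every gate of $C'$ computes a function that is monotone in each $x_e$ with $e \in E(U)$: gates inside a copy $C(\cdot, z^T)$ do not depend on $x|_{E(U)}$ at all; conjunctions $\bigwedge_{e \in E(T)} x_e$ are monotone in their $E(U)$-inputs; and the remaining $\wedge, \vee$ gates preserve monotonicity in $x|_{E(U)}$. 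Hence every gate $g$ admits an orientation $\beta_g$ with $\beta_{g,e} = 0$ for every edge $e$ inside $U$, as required.

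The main obstacle is to avoid the naive ``OR over all $2^{\binom{|U|}{2}}$ assignments to $E(U)$'' blow-up, which would contribute $\Theta(\log^{2k} n)$ extra depth and miss the target. The key observation is that because $\Clique$ is itself monotone in $x|_{E(U)}$, only the $2^{|U|}$ ``maximal'' assignments of the form $z^T$ (for subsets $T \subseteq U$) are actually needed: any extension of a clique on $T$ is already captured by the $T$-th disjunct. This is precisely what reduces the depth overhead from $\binom{|U|}{2}$ to $|U| = c\log^k n$.
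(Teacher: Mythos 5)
Your proof is correct, and it takes a genuinely different route from the paper. The paper works entirely inside the Karchmer--Wigderson framework: Alice picks a clique $K$, sends the characteristic vector of $K\cap U$ (costing $|U|$ bits), both players prune their graphs so that they \emph{agree} on $\binom{U}{2}$, and then the standard KW game runs on $C$; since the players never disagree on an intra-$U$ edge when the game reaches the inner protocol, the resulting DeMorgan circuit has no negated literal $\bar{x}_{e(u,v)}$ with $u,v\in U$. Your proof instead builds the circuit $C'$ directly from the disjunctive identity
\[
\Clique(G) = \bigvee_{T\subseteq U}\Bigl(\bigwedge_{e\in E(T)} x_e\Bigr)\wedge C\bigl(x|_{E\setminus E(U)}, z^T\bigr),
\]
and then verifies that every gate of $C'$ is monotone in the coordinates $E(U)$, so Proposition~\ref{prop:uniquebeta} (or Proposition~\ref{thm:negatedimpliesone}) forces $\beta_e=0$ on those coordinates for every gate, hence for the uniform orientation as well. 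Both arguments hinge on the same observation---that by monotonicity of $\Clique$ one only needs to ``announce'' the set $K\cap U$ ($2^{|U|}$ possibilities, giving an overhead of $|U|$ levels) rather than the full edge-pattern on $\binom{U}{2}$ ($2^{\binom{|U|}{2}}$ possibilities)---but they formalize it on opposite sides of the circuit/protocol correspondence. Your construction has the advantage of being fully explicit (and immediately readable as a circuit), at the cost of a $2^{|U|}$ size blow-up which, as you correctly note, is irrelevant since the theorem only controls depth. Two very minor remarks: your stated depth is $d+c\log^k n+O(1)$ rather than exactly $d+c\log^k n$ (the extra $+1$ from the bridging $\wedge$ gate; harmless since $c$ is a free constant), and the phrase about ``placing the deeper sub-circuit near the root of the $\wedge$-tree'' for a single binary $\wedge$ is a slight misstatement---the depth there is simply $1+\max(d,\,O(\log|U|))=d+1$ once $d=\Omega(\log n)\gg\log|U|$.
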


Thus if either Theorem~\ref{thm:uniform-beta-lb} is extended to $|U| =
\Omega(\log^k n)$ or the transformation in Theorem~\ref{thm:betareduction}
can be modified to give $|U| = O(\log^{k+\epsilon} n)$ for some
constant $\epsilon > 0$, then a depth lower bound for $\Clique$
function against general circuits of depth $O(\log^k n)$ will be
implied.

%%% Local Variables: 
%%% mode: latex
%%% TeX-master: "main.tex"
%%% End: 

\section{Preliminaries}

For $x,y \in \set{0,1}^n$, $x \leq y$ if and only if for all $i \in
[n]$, $x_i \leq y_i$. A Boolean function $f$ is said to be monotone if
for all $x \leq y$, $f(x) \leq f(y)$. In other words value of a
monotone function does not decrease when input bits are changed from
$0$ to $1$.

For a set $U$, we denote by $\binom{U}{2}$ the set $\set{\set{u,v} |
  u,v \in U}$.  In an undirected graph $G=(V,E)$, a clique is a set $S
\subseteq V$ such that $\binom{S}{2} \subseteq E(G)$.  $\Clique(n,k)$
is a Boolean function $f : \set{0,1}^{\binom{n}{2}} \to \set{0,1}$
such that for any $x \in \set{0,1}^{\binom{n}{2}}$, $f(x)=1$ if $G_x$,
the undirected graph represented by the undirected adjacency matrix
$x$ has a clique of size $k$.  $\Clique(n,k)$ is a monotone function
as adding edges (equivalent to turning $0$ to $1$ in adjacency matrix)
cannot remove a $k$-clique, if one already exists. By $\Clique$, we
denote $\Clique(n,\frac{n}{2})$.  A perfect matching of an undirected
graph $G=(V,E)$ is a $M \subseteq E(G)$ such that no two edges in $M$
share an end vertex and it is such that every vertex $v \in V$ is
contained as an end vertex of some edge in $M$.  Corresponding Boolean
function $\PMatch : \set{0,1}^{\binom{n}{2}} \to \set{0,1}$ is defined
as $\PMatch(x)=1$ if $G_x$ contains a perfect matching.  Note that
$\PMatch$ is also a monotone function.
% It is easy to
% note that $\PMatch$ is also a monotone function.

A circuit is a directed acyclic graph whose internal nodes are labeled
with $\land$, $\lor$ and $\lnot$ gates, and leaf nodes are labeled
with inputs. The function computed by the circuit is the function
computed by a designated ``root'' node. All our circuits are of
bounded fan-in.  The depth of a circuit $C$, denoted by $\depth(C)$ is
the length of the longest path from root to any leaf, and $\depth(f)$
denotes the minimum possible depth of a circuit computing $f$. By
$\depth_{t}(f)$ we denote the minimum possible depth of a circuit
computing $f$ with at most $t$ negations. Size of a circuit is simply
the number of internal gates in the circuit, and is denoted by
$\size(C)$. $\size(f),\size_{t}(f)$ are defined analogous to
$\depth(f),\depth_{t}(f)$ respectively.  We refer the reader to a
standard textbook~(cf. \cite{vollmertext}) for more details.
%
%%\begin{definition}[Orientation of a Boolean function]
%A function $f : \set{0,1}^n \to \set{0,1}$ is said to have {\em
%  orientation} $\beta \in \set{0,1}^n$ if there is a monotone function
%$h: \set{0,1}^{2n} \to \set{0,1}$ such that :
%%\begin{displaymath}
%$\forall x \in \set{0,1}^n, f(x) = h ( x,(x \oplus \beta) )$.
%%\end{displaymath}
%%\end{definition}

We now review the Karchmer-Wigderson games and the related lower bound
framework. The technique is a strong connection between circuit depth
and communication complexity of a specific two player game where the
players say Alice and Bob are given inputs $x \in f^{-1}(1)$ and $y
\in f^{-1}(0)$, respectively. In the case of general circuits, the
game is denoted by $\KW(f)$ and the goal is to find an index $i$ such
that $x_i \ne y_i$. In the case of monotone circuits, the game is
denoted by $\KWP(f)$ and the goal is to find an index $i$ such that
$x_i = 1$ and $y_i = 0$. Since monotone circuits compute monotone
functions $\KWP(f)$ defined only for monotone Boolean functions
$f$. We abuse the notation and use $\KW(f)$ and $\KWP(f)$ to denote
the number of bits exchanged in the worst case for the best protocol
solving the corresponding communication game. Karchmer and
Wigderson~\cite{Karchmer:1988:MCC:62212.62265} proved that for any
function $f$ best possible depth any circuit computing $f$, denoted by
$\depth(f)$ is equal to $\KW(f)$. And for any monotone function $f$ the
best possible depth of any monotone circuit computing $f$, denoted by
$\depthp(f)$ is equal to $\KWP(f)$. Raz and
Wigderson~\cite{RazMatchingSuperLinearDepth} showed that
$\KWP(\Clique)$ and $\KWP(\PMatch)$ are both $\Omega(n)$.

\subsection{Characterization of Orientation}
We recall the definition of orientation :
%\begin{definition}[Orientation of a Boolean function]
 a function $f : \set{0,1}^n \to \set{0,1}$ is said to have {\em
    orientation} $\beta \in \set{0,1}^n$ if there is a monotone
  function $h: \set{0,1}^{2n} \to \set{0,1}$ such that :
%  \begin{displaymath}
    $\forall x \in \set{0,1}^n, f(x) = h ( x,(x \oplus \beta) )$.
%  \end{displaymath}
%\end{definition}
Thus, if $\beta \in \set{0,1}^n$ is an orientation for a
function $f$, then any $\beta' \geq \beta$ is also an orientation for
$f$ by definition of orientation.
This
is because one can interpret orientation vector $\beta$ as an advice
containing negation of a subset variables so that $f$ can be written as a
monotone function of the input along with the negated variables in
$\beta$. Hence for any superset of these negated variables ($\beta' >
\beta$) it would still be possible to write $f$ as a monotone function
of the input and negated variables in $\beta$ alone.

We first show that any function $f(x)$ can be written in the form of
definition as an $h(x,x\oplus \beta)$ for a monotone function $h$.
Let $C$ be any circuit computing $f$. Convert $C$ into a DeMorgan
circuit $C'$ by pushing down the negations via repeated applications of
De-Morgan's law. In $C'$ replace every $\bar{x_i}$ with a new
variable $y_i$ for every $i\in [n]$. Thus $C'$ on inputs $x,y$ is a
monotone function. Since there are $n$ input variables at most $n$
$y_i$'s are needed. Let $h=C'(x_1,\dots,x_n,y_1,\dots,y_n)$ be the
monotone function computed by $C'$ after replacing the negated inputs
by fresh variables. Clearly $h$ satisfies the required form with
$\beta$ defined as $\beta_i=1$ if and only if $\bar{x_i}$ appears in
$C'$. Now if the function $f$ has such a form, take any monotone
circuit $C_h$ computing $h$. Replace all the inputs $x_i \oplus
\beta_i$ where $\beta_i=1$ with $\bar{x_i}$ and all the inputs $x_i
\oplus \beta_i$ where $\beta_i=0$ with $x_i$ in $C_h$. Thus we get a
circuit $C''$ computing $f$, which is De-Morgan and has negations only
on variables where $\beta_i=1$. Thus for any function $f$ whose
orientation is $\beta$, there is a circuit $C$ of uniform orientation
$\beta$. This is because a sub-circuit rooted at any gate of $C''$ is also
a De-Morgan circuit and has negated variables which are a subset of negated
variables in $C''$.

We now establish that orientation is a well-defined measure.  We
prove a sufficient condition for the $\beta_i$ to be $1$ in any
orientation for a function $f$.

\begin{proposition}\label{thm:negatedimpliesone}
  For any function $f$, if there exists a pair $(u,v)$ such that
  $u_i=0,v_i=1$, $u_{[n]\setminus \set{i}}=v_{[n]\setminus \set{i}}$
  and $f(u)=1,f(v)=0$ then any orientation $\beta$ of the function
  must have $\beta_i=1$.
\end{proposition}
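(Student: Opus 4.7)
The plan is a straightforward proof by contradiction. Suppose for contradiction that $f$ has an orientation $\beta$ with $\beta_i = 0$. By the definition of orientation, there is a monotone function $h : \set{0,1}^{2n} \to \set{0,1}$ with $f(x) = h(x, x \oplus \beta)$ for every $x$. I would then apply this identity to the two inputs $u$ and $v$ supplied by the hypothesis and compare the resulting $2n$-bit arguments of $h$.

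The key observation is the following. Since $u$ and $v$ agree on every coordinate except $i$, with $u_i = 0$ and $v_i = 1$, the concatenations $(u, u \oplus \beta)$ and $(v, v \oplus \beta)$ can differ only at coordinate $i$ of the first block and coordinate $n+i$ of the second block. Using $\beta_i = 0$, we compute $(u \oplus \beta)_i = u_i = 0$ and $(v \oplus \beta)_i = v_i = 1$, while for $j \ne i$ we have $(u \oplus \beta)_j = u_j \oplus \beta_j = v_j \oplus \beta_j = (v \oplus \beta)_j$. Therefore $(u, u \oplus \beta) \leq (v, v \oplus \beta)$ coordinatewise.

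By monotonicity of $h$, this forces $h(u, u \oplus \beta) \leq h(v, v \oplus \beta)$, i.e., $f(u) \leq f(v)$. But the hypothesis gives $f(u) = 1$ and $f(v) = 0$, contradicting $1 \leq 0$. Hence no orientation of $f$ can have $\beta_i = 0$, completing the proof.

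There is no real obstacle here; the only subtlety is tracking how the two halves of the argument to $h$ transform when the $i$th input bit is flipped, and verifying that when $\beta_i = 0$ the two halves move together (both $0$ in $u$'s side, both $1$ in $v$'s side), preserving the pointwise inequality needed to invoke monotonicity. Had $\beta_i$ been $1$ instead, the second-block coordinate would move in the opposite direction and the inequality would break, which is precisely why the conclusion is specifically that $\beta_i$ must equal $1$.
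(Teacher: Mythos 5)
Your proof is correct and follows essentially the same argument as the paper: assume $\beta_i = 0$, observe that the $2n$-bit arguments to $h$ then satisfy $(u, u\oplus\beta) \leq (v, v\oplus\beta)$ because coordinates $i$ and $n+i$ both go from $0$ to $1$ while the rest agree, and derive a contradiction with monotonicity of $h$ since $f(u)=1 > 0 = f(v)$.
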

\begin{proof}
  Let $h$ be the monotone function corresponding to $f$ for $\beta$
  such that $\forall x, f(x)=h(x,x\oplus \beta)$.  Assume to the
  contrary that $\beta_i=0$. Since $u_{[n]\setminus \set{i}}=v_{[n]
    \setminus \set{i}}$, we have that $u_{[n]\setminus \set{i}} \oplus
  \beta =v_{[n] \setminus \set{i}} \oplus \beta$ for any
  $\beta$. Hence $(u,u\oplus \beta),(v,v\oplus \beta)$ differs only in
  two indices, namely $i,n+i$. At $i$, $u_i=0,v_i=1$, and at $n+i$
  since $\beta_i=0$, $u_{n+i}=0,v_{n+i}=1$. Hence we get that
  $(u,u\oplus \beta) < (v,v\oplus \beta)$, but $h(u,u\oplus
  \beta)=1,h(v,v\oplus \beta)=0$ a contradiction to monotonicity of
  $h$. 
\end{proof}

It is not a priori clear that the minimal (with respect to $<$
relation on the Boolean hypercube $\set{0,1}^n$) orientation for a function
$f$ is unique. We prove that it is indeed unique.

\begin{proposition}
  \label{prop:uniquebeta}
  Minimal orientation for a function $f : \set{0,1}^n \to \set{0,1}$ is
  well defined and it is $\beta \in \set{0,1}^n$ such that $\beta_i=1$
  if and only if there exists a pair $(u,v)$ such that $u_i=0,v_i=1$,
  $u_{[n]\setminus \set{i}}=v_{[n]\setminus \set{i}}$ and
  $f(u)=1,f(v)=0$.
\end{proposition}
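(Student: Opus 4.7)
Let $\beta^{\star} \in \set{0,1}^n$ denote the vector described in the proposition: $\beta^{\star}_i = 1$ exactly when some pair $(u,v)$ witnesses non-monotonicity of $f$ at coordinate $i$. The plan is to show two things: (a) any orientation $\beta$ of $f$ satisfies $\beta \geq \beta^{\star}$, so $\beta^{\star}$ is a lower bound on all orientations with respect to the coordinate-wise order; and (b) $\beta^{\star}$ is itself an orientation of $f$. Together these give that $\beta^{\star}$ is the unique minimum element in the set of orientations of $f$, which is the content of the proposition.

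Part (a) is essentially immediate from Proposition~\ref{thm:negatedimpliesone}: whenever $\beta^{\star}_i = 1$ there is a witnessing pair $(u,v)$, and the earlier proposition forces $\beta_i = 1$ in every orientation of $f$. So I can dispose of this direction in one line.

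The substantive step is part (b): I need to exhibit a monotone $h : \set{0,1}^{2n} \to \set{0,1}$ with $f(x) = h(x, x \oplus \beta^{\star})$. The natural candidate is the monotone closure of the diagonal graph of $f$: set
\[ h(a,b) = 1 \iff \exists\, z \in f^{-1}(1) \text{ with } z \leq a \text{ and } z \oplus \beta^{\star} \leq b. \]
Monotonicity of $h$ is built into the definition. The $f(x)=1$ direction of the diagonal identity follows by taking $z = x$. The hard part, and the main obstacle, is the $f(x)=0$ direction: I must show that no $z \in f^{-1}(1)$ can witness $h(x, x \oplus \beta^{\star}) = 1$. Unpacking the two inequalities $z \leq x$ and $z \oplus \beta^{\star} \leq x \oplus \beta^{\star}$ coordinate-wise shows that $z$ and $x$ must agree on every index $i$ with $\beta^{\star}_i = 1$, and $z_i \leq x_i$ on every index with $\beta^{\star}_i = 0$. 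So $z$ and $x$ differ only on coordinates where $\beta^{\star}$ is $0$.

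To close this out, I walk along a path from $z$ to $x$ in the Boolean cube that flips one differing coordinate at a time (all such coordinates have $\beta^{\star} = 0$). Since $f(z)=1$ and $f(x)=0$, some flip along the path must switch the $f$-value from $1$ to $0$, producing a pair $(u,v)$ differing only in a single coordinate $i$ with $\beta^{\star}_i = 0$, $u_i = 0$, $v_i = 1$, $f(u)=1$, $f(v)=0$. But the existence of such a pair forces $\beta^{\star}_i = 1$ by definition of $\beta^{\star}$, the desired contradiction. This path argument is really where all the work sits; once it is in hand, uniqueness of the minimum orientation follows trivially from (a) and (b), completing the proof.
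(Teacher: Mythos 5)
Your proof is correct and follows essentially the same route as the paper: part~(a) is identical, and part~(b) rests on the same chain-walking argument from $z$ to $x$ to exhibit a single-coordinate monotonicity violation that forces $\beta^{\star}_i = 1$ at a coordinate where it was assumed to be $0$. The one difference worth noting is that the paper defines the partial function $h(x, x\oplus\beta^{\star}) := f(x)$ on the diagonal, shows it admits no $1\to 0$ transition along chains, and then simply \emph{asserts} that such a partial function extends to a monotone total function; you instead write down the extension explicitly as the monotone closure $h(a,b) = 1 \iff \exists z \in f^{-1}(1)$ with $z \le a$ and $z \oplus \beta^{\star} \le b$, which makes the existence step self-contained and is a modest tightening of the paper's argument rather than a different approach.
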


\begin{proof}
  From Proposition~\ref{thm:negatedimpliesone} it is clear that any
  orientation $\beta'$ of a function $f$ is such that $\beta \leq
  \beta'$.  We claim that negations of variables in $\beta$ suffices
  to compute $f$ using a DeMorgan circuit. Define a partial function
  $h: \set{0,1}^{2n} \to \set{0,1}$ associated with orientation
  $\beta$ of $f$ as $h(x,x\oplus \beta) \triangleq f(x)$.  We claim
  that this partial function has an extension which is a monotone
  function. We claim that for any $u,v\in \set{0,1}^n$ such that
  $u\leq v$ and $f(u)=1,f(v)=0$, there exists an $i\in [n]$ such that
  $u_i =0,v_i=1$ and $\beta_i=1$. Let $w_0=u \leq w_1 \leq \cdots \leq
  w_j \leq w_{j+1} \leq \cdots \leq w_k=v$ be a chain between $u$ and
  $v$. Take the minimum $j$ such that $f(w_j)=1$ and
  $f(w_{j+1})=0$. Since $w_j,w_{j+1}$ satisfies assumptions of
  Proposition~\ref{thm:negatedimpliesone}, for the $i$ where $w_j$ and
  $w_{j+1}$ differs, $\beta_i=1$. Since $u \leq w_j$ and $i$th bit of
  $w_j$ is $0$, we get $u_i=0$. Similarly $v_i=1$ as $v \geq w_{j+1}$
  and $i$th bit of $w_{j+1}$ is $1$. With this claim we can prove that
  for any $(s,s \oplus \beta)$ and $(t,t \oplus \beta)$ either they
  are incomparable or $f(s) \geq f(t)$ if and only if $(s,s \oplus
  \beta) \geq (t,t \oplus \beta)$. Assume to the contrary that $f(s) <
  f(t)$ and $(s,s \oplus \beta) \geq (t,t \oplus \beta)$. Since $(s,s
  \oplus \beta) \geq (t,t \oplus \beta)$, $s \geq t$ and
  $f(s)=0,f(t)=1$ as $f(s) < f(t)$. But then we are guaranteed by the
  earlier claim an $i \in [n]$ such that
  $s_i=1,t_i=0,\beta_i=1$. Since $\beta_i=1$, $s_i \oplus \beta_i=0$
  and $t_i \oplus \beta_i=1$ whereas $s_i=1,t_i=0$ implying that $(s,s
  \oplus \beta) \not \geq (t,t \oplus \beta)$, a contradiction. Thus
  the partial function we defined will never have a chain with a $1$
  to $0$ transition. Also any partial function $h$ which does not have
  a $1 \to 0$ transition on any of the chains of the Boolean
  hypercube, has an extension to a function which is monotone.
  % It is easy to verify for any partial
  % function $h$ which does not have a $1\to 0$ transition on any of its
  % chains, a monotone function extending $h$ can be
  % obtained.
\end{proof}

%%% Local Variables: 
%%% mode: latex
%%% TeX-master: "main"
%%% End: 

\section{Lower Bound Argument for Sparsely Oriented Circuits}

In this section, we prove Theorem~\ref{introthm:mainlb} which shows
the trade-off between depth and weight of orientation of the internal
gates of a circuit. We prove the following lemma which is the main
contribution of our paper.
% We prove the following main lemma of our paper.

\begin{lemma}
  If $C$ is a depth $d$ circuit computing a monotone Boolean function $f :
  \set{0,1}^n \to \set{0,1}$ which is sensitive on all its inputs
  and each internal gate of $C$ computes a Boolean function whose
  orientation has weight at most $w$, then $d \times (4w+1) \ge
  \KWP(f)$.
\end{lemma}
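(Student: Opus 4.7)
The plan is to prove this lemma by constructing a $\KWP$ protocol for $f$ of cost at most $d(4w+1)$ from the given circuit $C$; combined with the Karchmer-Wigderson characterization $\KWP(f) = \depthp(f)$, this yields the claimed bound. The protocol simulates $C$ top-down, maintaining at the currently visited gate $g$ the invariant that $g(x) = 1$ and $g(y) = 0$. This invariant holds at the root since $g = C$ computes $f$, with $f(x) = 1$ and $f(y) = 0$ by Alice's and Bob's inputs respectively.

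At each gate $g$ with orientation $\beta_g$ of weight at most $w$, I would exploit the monotone representation $g(x) = h_g(x, x \oplus \beta_g)$ and consider the $2n$-dimensional extended inputs $X = (x, x \oplus \beta_g)$ and $Y = (y, y \oplus \beta_g)$. The monotonicity of $h_g$, together with $h_g(X) = 1$ and $h_g(Y) = 0$, guarantees a coordinate $j \in [2n]$ with $X_j = 1$ and $Y_j = 0$. For $j \in [n]$, or for $j = n+i$ with $i \notin \beta_g$, this already yields a valid $\KWP$ witness $x_i = 1, y_i = 0$; only for $j = n+i$ with $i \in \beta_g$ do we get the wrong direction $x_i = 0, y_i = 1$. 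To eliminate this bad case in advance, Alice sends $x|_{\beta_g}$ and Bob sends $y|_{\beta_g}$ using $2w$ bits total; if any $i \in \beta_g$ has $x_i = 1, y_i = 0$ they output $i$ and halt, otherwise they know $x|_{\beta_g} \le y|_{\beta_g}$ componentwise, which rules out any bad witness in $\{n+i : i \in \beta_g\}$.

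After ruling out the bad indices, the remaining $2w+1$ bits are spent descending to a child of $g$ while preserving the invariant. For AND (respectively OR) gates, a single bit of standard Karchmer-Wigderson descent suffices, with Bob indicating which child evaluates to $0$ (respectively Alice indicating which child evaluates to $1$), and the gate semantics preserve the invariant on the chosen child. For a NOT gate $g = \lnot g_1$ the invariant would naively flip to $g_1(x) = 0, g_1(y) = 1$, which is the wrong direction for $\KWP$. Here the key observation, exploiting Proposition~\ref{prop:uniquebeta}, is that $\beta_g \cup \beta_{g_1}$ covers every sensitive coordinate of $g_1$ (the former records the increasing-pair indices of $g_1$, the latter the decreasing ones), so $g_1$ depends on at most $2w$ variables; Alice and Bob can therefore exchange $x$ and $y$ restricted to this dependence set, fully resolving the gate and locating a $\KWP$ witness among those coordinates when one exists, all within the budgeted $4w$ bits. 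Eventually the descent reaches a leaf labeled $x_i$ where the invariant gives $x_i = 1$ and $y_i = 0$, the required $\KWP$ witness.

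The main obstacle will be handling NOT gates, since the directional nature of the $\KWP$ game makes a naive descent through a NOT gate yield a witness pointing the wrong way. The resolution rests critically on the orientation characterization in Proposition~\ref{prop:uniquebeta}: the minimum orientation of a function records exactly its decreasing-pair indices, so the bound $|\beta_g|, |\beta_{g_1}| \le w$ at a NOT gate automatically bounds the dependence of $g_1$ by $2w$ and makes direct resolution by exchange feasible inside the per-gate budget. Summing $4w+1$ bits across the $d$ levels of $C$ then gives a $\KWP$ protocol of cost at most $d(4w+1)$, finishing the proof.
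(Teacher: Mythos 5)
Your protocol keeps the original inputs $x$ and $y$ fixed and maintains only the invariant $g(x)=1,\;g(y)=0$ at the current gate. This is too weak: once the descent reaches a negative literal or enters a $\neg$-gate, there need not be any valid $\KWP$ witness available in the relevant coordinates, and your exchange-and-resolve step has nothing to output. Concretely, suppose the descent arrives at an $\land$-gate $p = g' \land \bar{x}_5$ with $p(x)=1,\;p(y)=0$ and $y_5=1$ (so $\bar{x}_5(y)=0$). Your protocol sends $x|_{\beta_p},y|_{\beta_p}$, finds $x_5=0,\,y_5=1$ (not a $\KWP$ witness), and then Bob's one-bit descent selects the child $\bar{x}_5$. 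At that leaf $x_5=0$ and $y_5=1$; exchanging the at most $2w$ sensitive coordinates of the negation (here just $\{5\}$) produces no index $i$ with $x_i=1,\,y_i=0$, and you have discarded all information about the rest of the circuit. Your phrase ``locating a $\KWP$ witness among those coordinates when one exists'' concedes the gap: one need not exist. Also, knowing $x|_{\beta_g} \le y|_{\beta_g}$ does not \emph{rule out} the bad direction $x_i=0,\,y_i=1$; it is precisely consistent with it.

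The paper's protocol closes exactly this hole by \emph{modifying} the inputs as it descends, and by looking one level ahead rather than at the current gate. At an $\land$-gate, Alice sends $x'$ restricted to the union $S$ of the orientations of both children and, additionally, of the inputs to any negation-gate children; if no witness appears, Bob overwrites $y'$ on $S$ by $x'_S$ to get $y''$ with $y''_S=x'_S$, and a monotonicity argument shows $f(y'')$ is still $0$, so the descent can continue. Because the modified inputs now agree on the full sensitivity set of any negation-gate child, Lemma~\ref{lemma:NegationSensitivity} forces that child to take the same value on $x'$ and $y''$, so the protocol is never sent into a $\neg$-gate at all, and by the same reasoning it can never stop at a negative literal. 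The price is sending up to $4w$ bits of orientation per round (two children plus two negation inputs), matching the budget $4w+1$. The modification is essential to ``import'' the separation $x\not\le y$ into deeper sub-circuits; without it, your invariant cannot be turned into a correct $\KWP$ output.
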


\begin{proof} The proof idea is to devise a protocol for $\KWP(f)$
  using $C$ having $\depth(C)$ rounds and each round having a
  communication cost of $4w+1$.

  Alice is given $x \in f^{-1}(1)$ and Bob is given $y \in
  f^{-1}(0)$. The goal is to find an index $i$ such that
  $x_i=1,y_i=0$. The protocol is described in
  Algorithm~\ref{algo:protocolKWModified}.

\begin{spacing}{0.8}
\begin{algorithm}[h]
\caption{Modified Karchmer-Wigderson Protocol}
\label{algo:protocolKWModified}
\begin{algorithmic}[1]
  \STATE\COMMENT{Let $x'$ and $y'$ be the current inputs. At the
    current gate $g$ computing $f$, with the input gates $g_1$ and
    $g_2$ where $f_1$ and $f_2$ are the functions computed, let
    $\beta_1,\beta_2$ be the corresponding orientations (and are known
    to both Alice and Bob).  If $g_1$ or $g_2$ is a negation gate, let
    $\gamma_1$ and $\gamma_2$ be the orientation vectors of input
    functions to $g_1$ and $g_2$, otherwise they are $0$-vectors. Let
    $\alpha = \beta_1 \lor \beta_2 \lor \gamma_1 \lor \gamma_2$.  Let
    $S = \{i: \alpha_i = 1\}$, $x_S$ is the substring of $x$ indexed
    by $S$.  } \IF{$g$ is $\land$} \STATE{Alice sends $x'_S$ to
    Bob. Bob compares $x'_S$ with $y'_S$.}  \IF{there is an index $i
    \in S$ such that $x'_i=1$ and $y'_i=0$}\label{step:indexi}
  \STATE{Output $i$}.  \ELSE
  \STATE{Define $y'' \in \set{0,1}^n$:  $y''_S = x'_S$ and $y''_{[n]\setminus S} = y'_{[n]\setminus S}$.} \\
  \STATE{Bob sends $i \in \{1,2\}$ such that $f_i(y'') = 0$ to
    Alice. They recursively run the protocol on $g_i$ with $x'=x'$ and
    $y'=y''$.}\label{step-not-stuck-and}
      \ENDIF
\ENDIF
\IF{$g$ is $\lor$} 
      \STATE{Bob sends $y'_S$ to Alice. Alice compares $y'_S$ with $x'_S$.}
      \IF{there is an index $i \in S$ such that $x'_i=1$ and $y'_i=0$}
      \STATE{Output $i$}.
      \ELSE
      \STATE{Define $x'' \in \set{0,1}^n$:  $x''_S = y'_S$ and $x''_{[n] \setminus S} = x'_{[n] \setminus S}$.} \\
	  \STATE{Alice sends $i \in \{1,2\}$ such that $f_i(x'') = 1$ to Bob. They recursively run the protocol on $g_i$ with $x'=x''$ and $y'=y'$.}\label{step-not-stuck-or}
      \ENDIF
\ENDIF
\end{algorithmic}
\end{algorithm}
\end{spacing}

\vspace{3mm}
We now prove that the
protocol (Algorithm~\ref{algo:protocolKWModified}) solves
$\KWP(f)$. The following invariant is maintained during the run
of the protocol and is crucial for the proof of correctness.

\textbf{Invariant:} When the protocol is at a node which computes a
function $f$ with orientation vector $\beta$ it is guaranteed a priori
that the inputs held by Alice and Bob, $x'$ and $y'$ are equal on the
indices where $\beta_i=1$, $f(x')=1,f(y')=0$ and restriction of $f$
obtained by fixing variables where $\beta_i=1$ to $x'_i (=y'_i)$ is a
monotone function.

If the invariant is maintained, we claim that when the
protocol stops at a leaf node of the circuit computing a function
$f$ with $f(x')=1$ and $f(y')=0$ then $f=x_i$ for some $i\in [n]$. If
the leaf node is a negative literal, say $\bar{x_i}$ then by
Proposition~\ref{thm:negatedimpliesone}, orientation of $\bar{x_i}$
has $\beta_i=1$. By the guarantee that $x'_\beta=y'_\beta$,
$x'_i=y'_i$, contradicting $f(x')\neq f(y')$. Hence whenever the
protocol stops at a leaf node it is guaranteed that the leaf is labeled
by a positive literal.  And when input node is labeled by a positive
literal $x_i$, then a valid solution is output as $f(x')=1,f(y')=0$
implies $x'_i=1$ and $y'_i=0$.  Note that during the run of the
protocol we only changed $x,y$ at indices $i$ where  $x_i \neq y_i$, to
$x'_i = y'_i$. Hence, any index where $x'_i \neq y'_i$ it is the case
that $x_i=x'_i$ and $y_i=y'_i$.
  
Now to prove the invariant note that it is vacuously true at the root
gate as $f$ is a monotone function implying $\beta=0^n$, and in the
standard $\KWP(f)$ game $x \in f^{-1}(1)$ and $y \in f^{-1}(0)$. We
argue that, while descending down to one of the children of the
current node the invariant is maintained. To begin with, we show that
the protocol does not get stuck in step~\ref{step-not-stuck-and} (and
similarly for step~\ref{step-not-stuck-or}). To prove this, we claim
that at an $\land$ gate $f=f_1 \land f_2$, if the protocol failed to
find an $i$ in step~\ref{step:indexi} such that $x'_i=1,y'_i=0$ then
on the modified input $y''$ at least one of $f_1(y'')$ or $f_2(y'')$ is
guaranteed to be zero.  Since the protocol failed to output an $i$
such that $x'_i=1,y'_i=0$, it must be the case that $x'_i \leq
y'_i$ for indices indexed by $\beta_1,\beta_2$.  Let $U$ be the subset
of indices indexed by $\beta_1$ and $\beta_2$ where $x_i=0$ and $y_i=1$.
Bob obtains $y''$ from $y'$ by setting $y''_i=0$ for all $i\in U$.
Thus we have made sure that $x'$ and $y''$ are the same on the variables
whose negations are required to compute $f,f_1$ and $f_2$.

Consider the functions $f',f'': \set{0,1}^{n-|\beta_1 \lor \beta_2|}
\to \set{0,1}$ which are obtained by restricting the variables indexed
by orientation vectors of $f_1$ and $f_2$ to the value of those
variables in $x'$.  Both $f'$ and $f''$ are monotone as they are
obtained by restricting all negated input variables of the DeMorgan
circuits computing $f_1$ and $f_2$ for orientations $\beta_1$ and
$\beta_2$ respectively.  The changes made to $x',y'$ were only at
places where they differed.  Thus at all the indices where $x',y'$
were same, $x',y''$ is also same.  Hence monotone restriction
$f_{x'_\beta}$ of $f$ obtained by setting variables indexed by $\beta$
to their values in $x'$ is a consistent restriction for $y''$ also.
Note that $y'' \leq y'$. Hence $f(y'')=0$ because $y''$ agrees with
$y'$ on variables indexed by $\beta$ (as $x''$ agrees with $y'$ and
$y''$ on variables indexed by $\beta$) implying
$f_{x'_\beta}(y''_{[n]\setminus \beta})\leq
f_{x'_\beta}(y'_{[n]\setminus \beta})=0$.  Since $f(y'')=0$, it is
guaranteed that one of $f_1(y''),f_2(y'')$ is equal to $0$.  Bob sets
$y'=y''$ and sends $0$ if it is $f_1(y'')=0$ or $1$ otherwise,
indicating Alice which node to descend to. Note that
$x'_{\beta_1}=y''_{\beta_1}$, $x'_{\beta_2}=y''_{\beta_2}$ and
restriction of $f_1,f_2$ to $x'_{\beta_1},x'_{\beta_2}$ respectively
gives monotone functions $f',f''$ thus maintaining the invariant for
both $f_1$ and $f_2$.
  
We claim that if any of the input gates $g_1,g_2$ to the current
$\land$ gate $g$ is a $\neg$ gate then the protocol will not take the
path through the negation gate.  To argue this, we use the following
lemma.  
\begin{lemma}\label{lemma:NegationSensitivity}
  If $\ell,\bar{\ell}$ are functions with orientations
  $\beta,\gamma$, 
  then for all $x,y \in \set{0,1}^n$
  such that $x_{\beta \lor \gamma}=y_{\beta \lor \gamma}$, $\ell(x)=\ell(y)$.
\end{lemma}
\begin{proof}
  We know that for a function $\ell$, if there exists a pair $(u,v) \in
  \set{0,1}^n \times \set{0,1}^n$ with $u \leq v$, $u_i \neq v_i$,
  $u_{[n]\setminus \set{i}}=v_{[n]\setminus \set{i}}$ and $\ell(u)=1,
  \ell(v)=0$ then by Proposition \ref{thm:negatedimpliesone} for every
  orientation $\beta$, $\beta_i=1$ . Let $i$ be an index on which
  $\ell$ is sensitive, i.e., there exists $(u,v) \in \set{0,1}^n \times
  \set{0,1}^n$ with $u \leq v$, $u_i \neq v_i$, $u_{[n]\setminus
    \set{i}}=v_{[n]\setminus \set{i}}$ and $\ell(u) \neq \ell(v)$. Note
  that $l$ is sensitive on $i$ need not force $\beta_i=1$, as it
  could be that $\ell(u)=0$ and $\ell(v)=1$. But in this case
  $\bar{\ell}(u)=1$ and $\bar{\ell}(v)=0$, hence $\gamma_i = 1$ for
  $\bar{\ell}$. Hence, $\ell$ is sensitive only on indices in $\beta
  \lor \gamma$.
  \qed
\end{proof}
The lemma establishes that every negation gate in a weight $w$
oriented circuit computes a function which is sensitive on at most
$2w$ indeces.
% The lemma establishes that at every negation gate in weight $w$
% oriented circuit, a function which is sensitive on at most $2w$
% indices is computed. 
 Hence if $2w < n$ the root gate cannot be a negation gate
for a function sensitive on all inputs.  Suppose at a node one of the
children is a negation gate, say $f_1$.  Since we ensure $x'_{\beta_1
  \lor \gamma_1}=y''_{\beta_1 \lor \gamma_1}$, Lemma~\ref{lemma:NegationSensitivity} implies
$f_1(x')=f_1(y'')$. But the protocol does not descend down a path
where $x',y''$ are not separated.  Hence the claim.
  
This also proves that when the protocol reaches an $\land$ node where
both children are negation gates, at the round for that node protocol
outputs an index $i$ and stops.  Otherwise, since we ensure
$x'_{S}=y''_{S}$, $f_1(y'')=f_1(x')=1$ and $f_2(y'')=f_2(x')=1$ by
Lemma~\ref{lemma:NegationSensitivity}.  But this contradicts the fact
that at a node $f = f_1 \land f_2$ either $f_1(y'')=0$ or $f_2(y'')=0$
(or both).

Proof of equivalent claims for an $\lor$ gate is similar except for
the fact that Alice modifies her input.  

Thus, using the above protocol we are guaranteed to solve $\KWP(f)$.
Communication cost of any round is at most $4w+1$.  Because if any of
the children is a negation gate then we have to send its orientation
along with the orientation of its complement. The protocol clearly
stops after $\depth(C)$ many rounds. Thus communication complexity of
the protocol is upper bounded by $\depth(C) \times (4w+1)$.
\end{proof}

%%% Local Variables: 
%%% mode: latex
%%% TeX-master: "main"
%%% End: 

\section{Dense Orientation}

Currently our depth lower bound technique cannot handle orientations
of weight $\frac{n}{\log^k n}$ or more for obtaining $\omega(\log^k
n)$ lower bounds. In light of this, we explore the usefulness of
densely oriented gates in a circuit. First we prove that any
polynomial sized circuit can be transformed into an equivalent circuit
of polynomial size but having only $O(n \log n)$ gates of non-zero
orientation by studying the connection between orientations and
negations. Next we present a limitation of our technique in a circuit
having only two gates of non-zero (but ``dense'') orientation. Thus,
strengthening of our technique will have to use some property of the
function being computed. Finally we show how to use a property of
$\Clique$ function to slightly get around the limitation.

\subsection{From Negation Gates to Orientation}

Since weight of the orientation can be thought of as a measure of
non-monotonicity in a circuit, a natural question to explore is the
connection between the number of negations and number of non-zero
orientations required to compute a function $f$. We show the following: 
\begin{theorem}
  %\label{thm:negations-to-orientation}
  For any function $f : \set{0,1}^n \to \set{0,1}$, if there is a
  circuit family $\set{C_n}$ computing $f$ with $t(n)$ negations then
  there is also a circuit family $\set{C'_n}$ computing $f$ such
  that $\size(C'_n) \leq 2^t \times (\size(C_n) + 2^t) + 2^t$, and
  there are at most $2^{t-1} (t + 2) - 1$ internal gates whose
  orientation is non-zero.
\end{theorem}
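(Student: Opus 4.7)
The plan is to use a ``guess and verify'' approach: enumerate all $2^t$ possible values of the outputs of the negation gates, compute each resulting monotone sub-circuit, and then stitch them together using a small gadget that singles out the correct guess. Topologically order the negation gates of $C$ as $g_1, \ldots, g_t$. For each $b \in \set{0,1}^t$, let $C^b$ be the circuit obtained by replacing the output of each $g_i$ by the constant $b_i$; each $C^b$ is a monotone function of the input $x$ and has size at most $\size(C)$. Let $h_i^{b_{1:i-1}}(x)$ denote the monotone function computed at the input wire of $g_i$ after $g_1, \ldots, g_{i-1}$ have been substituted by $b_1, \ldots, b_{i-1}$ (this depends only on the prefix $b_{1:i-1}$). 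The ``correct'' guess is the unique $b^*(x)$ satisfying $b^*_i(x) = \lnot h_i^{b^*_{1:i-1}}(x)$, and then $f(x) = C^{b^*(x)}(x)$.

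First $C'$ computes all $2^t$ copies $C^b$ monotonely, contributing at most $2^t \cdot \size(C)$ gates of orientation zero. Then it eliminates the guess bits one at a time from the top via the identity
\[
f_{k-1}(x, b_{1:k-1}) = f_k(x, b_{1:k-1}, 0) \lor \bigl( \lnot h_k^{b_{1:k-1}}(x) \land f_k(x, b_{1:k-1}, 1) \bigr),
\]
with base case $f_t(x, b) = C^b(x)$ and final output $f_0(x) = f(x)$. This identity is valid because $C^b$ is monotone in each $b_i$, so $f_k(x, b_{1:k-1}, 0) \le f_k(x, b_{1:k-1}, 1)$. At recursion level $k$ there are $2^{k-1}$ sub-instances, each using a negation $\lnot h_k^{b_{1:k-1}}$ (which can be shared across the $2^{t-k+1}$ copies that agree on the prefix $b_{1:k-1}$) together with a constant number of $\land / \lor$ combining gates. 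Summed over all levels, the stitching contributes at most $O(t \cdot 2^t)$ extra gates, which fits the $2^{2t} + 2^t$ additive budget, giving $\size(C') \le 2^t (\size(C) + 2^t) + 2^t$.

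Counting the internal gates of non-zero orientation is the main obstacle. The monotone copies contribute none. By sharing $\lnot h_k^{b_{1:k-1}}$ once per distinct prefix we incur at most $\sum_{k=1}^t 2^{k-1} = 2^t - 1$ negations. The remaining combining gates produced by the recursion contribute, after absorbing those gates whose only non-monotone input is a constant-propagated wire into adjacent structure, a total of at most $t \cdot 2^{t-1}$; summing gives the claimed bound $t \cdot 2^{t-1} + 2^t - 1 = 2^{t-1}(t+2) - 1$. The difficulty is in making the sharing and absorption precise, since a naive tally of the unfolded recursion is a constant factor larger; the tight bound requires, for each level $k$ and prefix $b_{1:k-1}$, identifying which combining gates can be merged with ones at adjacent levels so that the per-level contribution is small. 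Given this careful accounting, both the size and orientation bounds follow.
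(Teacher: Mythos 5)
Your top-down ``fold-in'' recursion is a genuinely different route from the paper's proof, which instead unfolds all $2^t$ guesses at once into the flat formula $f = \bigvee_{b}\bigl(\bigwedge_{i} g_{i,b}^{b_i}\bigr) C''_n(x,b)$ and then tallies the non-monotone gates in the $\land$-trees and the top $\lor$-tree directly. Unfortunately your approach as written contains a genuine bug: the central identity
\[
f_{k-1}(x, b_{1:k-1}) = f_k(x, b_{1:k-1}, 0) \lor \bigl(\lnot h_k^{b_{1:k-1}}(x) \land f_k(x, b_{1:k-1}, 1)\bigr)
\]
requires $f_k(x, b_{1:k-1}, 0) \le f_k(x, b_{1:k-1}, 1)$ for all $x$, which you justify from the monotonicity of $C^b$ in $b$. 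But $C^b = f_t$ has \emph{all} negation outputs replaced by constants, while $f_k$ for $k<t$ has only the first $k$ replaced and still passes through the genuine negations $g_{k+1},\dots,g_t$. Raising $b_k$ can raise $h_{k+1}$, hence flip the output of $g_{k+1}$ from $1$ to $0$, so $f_k$ need not be monotone in $b_k$. Concretely, take $g_1 = \lnot(x_1\land x_2)$, $g_2 = \lnot(g_1 \land x_3)$, and root $x_4 \lor g_2$. Here $f_2(x,b_1,b_2)=x_4\lor b_2$ and $h_2^{b_1}(x)=b_1\land x_3$, so the recursion gives $f_1(x,b_1)=x_4\lor\lnot(b_1\land x_3)$; thus $f_1(x,0)\equiv 1$ while $f_1(x,1)=x_4\lor\lnot x_3$, i.e.\ $f_1$ is \emph{anti}-monotone in $b_1$, and your identity then forces $f_0\equiv 1$, whereas $f=(x_1\land x_2)\lor\lnot x_3\lor x_4$ is not identically $1$. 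The valid collapse is the full selector $f_{k-1}=(h_k\land f_k(\cdot,0))\lor(\lnot h_k\land f_k(\cdot,1))$, which costs an extra $\land$ per prefix and changes the tally.

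The closing ``absorption'' paragraph is also not an argument yet. Even taking your identity at face value, a direct count of the stitching gives $3(2^t-1)$ non-zero-orientation gates (one $\lnot$, one $\land$, one $\lor$ per distinct prefix), which is not ``a constant factor larger'' than $2^{t-1}(t+2)-1$: it exceeds the target for $t\le 3$ but is actually \emph{smaller} for $t\ge 4$. So the discrepancy cannot be waved away by absorption, and the regime $t\le 3$ where you overshoot is exactly where the absorption you gesture at would need to be made precise. I'd suggest fixing the identity to the two-term Shannon form first, re-counting, and only then deciding whether your recursion can match or beat the paper's constant.
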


\begin{proof}
  In $C_n$ replace input of each negation by new a variable, say
  $y_1,\dots,y_t$, thus obtaining a circuit
  $C^{''}_n(x_1,\dots,x_n,y_1,\dots,y_t)$.  Let $g_1,\dots,g_t$ be the
  inputs to the $t$ negation gates (in topologically sorted order) of
  $C_n$.  Note that for each setting of $y_1,\dots, y_t$ to some $b
  \in \set{0,1}^t$, $C^{''}_n(x,b)$ is monotone circuit computing a
  monotone function on $x_1,\dots,x_n$. Hence the orientation of each
  internal gate in $C^{''}_n(x,b)$ is zero.  Let $g_{i,b}$ for $i \in
  [t], b \in \set{0,1}^t$ denote the monotone function computed by the
  sub-circuit $C_{g_i}$ of $C_n$ rooted at gate $g_i$, where
  $g_1,\dots,g_{i-1}$ are set to $b_1,\dots,b_{i-1}$ respectively.
  Thus we can write f as:
  \begin{eqnarray}\label{eqn:NegationToOrientation}
    f(x_1,\dots,x_n) =  \bigvee_{b \in \set{0,1}^t} \left(
      \bigwedge_{i=1}^{t} g_{i,b}^{b_i}(x) \right) C^{''}_n(x,b), 
  \end{eqnarray}
  where $g^{0}$ denotes $\overline{g}$ and $g^{1}$ denotes $g$.  When
  $t=1$, then the above expression becomes
  $f(x)=g(x)C(x,1)+\overline{g}(x)C(x,0)$.  In this case the only
  gates which can have non-zero orientation are the negation
  computing $\overline{g}$, $\land$ computing $\overline{g}(x)C(x,0)$
  and the root gate (if the function computed is non-monotone).
  % Thus
  % in this case, $\neg$ computing $\overline{g}$, $\land$ computing
  % $\overline{g}(x)C(x,0)$ and the root gate if it is computing a
  % non-zero orientation function are the only gates with non-zero
  % orientation. 
Hence when $t=1$ the circuit has at most three gates
  with non-zero orientation if the circuit computes a  non-monotone
  function and at most two gates of non-zero orientation otherwise.

  Consider the formulation of a circuit $C'$ computing $f$ given in
  Equation~\ref{eqn:NegationToOrientation}. Clearly $\size(C') \leq
  2^t \times (\size(C_n) + 2^t) + 2^t$. All internal gates in
  $C''_n(a,b)$ are monotone. Non-zero orientation is needed only for
  computing:
  \begin{itemize}
  \item $\underset{i\in [t],b_i=0}{\bigwedge} \overline{g_{i,b}}$ 
  \item  $\land$ of   
    $\underset{i\in [t],b_i=0}{\bigwedge} \overline{g_{i,b}}$ with
    $\underset{i\in [t],b_i=1}{\bigwedge} g_{i,b} \land C^{''}(x,b)$
  \item the $\lor$-tree, computing $\sum$ of $2^t$ terms which are
    potentially of non-zero orientation.
  \end{itemize}
  For computing $\underset{i\in [t],b_i=0}{\bigwedge}
  \overline{g_{i,b}}$, we need an $\land$ tree of $t-|b|_1$ many
  leaves. Number of internal nodes in the tree is $t-|b|_1-1$ (for
  $t>1$). To compute the $\land$ of this intermediate product with
  $\underset{i\in [t],b_i=1}{\bigwedge} g_{i,b} \land C^{''}(x,b)$ one
  more gate is need. Thus the total number of gates needed is
  $t-|b|_1$. Let us call number of such gates $K_1$. By the above
  analysis, $K_1 = \sum_{b \in \set{0,1}^t} (t-|b|_1) = t \times
  2^{t-1}$.  The remaining gates are the internal gates in the $\lor$
  tree implementing the sum of terms. Since there are $2^t$ leaves,
  number of internal nodes in the tree, say $K_2$ is $2^t-1$. Hence
  total number of nodes with non-zero orientation is at most $K_1 +
  K_2 = 2^{t-1} (t + 2) - 1$.
\end{proof}

\begin{remark}
In conjunction with the result of
Fisher~\cite{FischerLimitedNegation}, this implies that it is enough
to prove lower bounds against circuits with at most $O(n \log n)$
internal nodes of dense orientations, to obtain lower bounds against
the general circuits.
\end{remark}

\subsection{Power of Dense Orientation}
\label{sec:powerdenseorientation}

We show that even as few as two ``densely'' oriented internal gates can
help to reduce the depth from super poly-log to poly-log for some
functions.

\begin{theorem}
\label{thm:2dense}
There exists a monotone Boolean function $f$ such that it cannot be
computed by poly-log depth monotone circuits, but there is a poly-log
depth circuit computing it with at most two internal gates have
non-zero orientation $\beta$.
\end{theorem}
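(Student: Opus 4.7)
The plan is to reduce to the $t=1$ case of Theorem~\ref{thm:negations-to-orientation}: as shown in the proof of that theorem, when a monotone function $f$ is computable by a circuit with exactly one negation gate, the construction yields an equivalent circuit of depth only $O(1)$ larger and at most two internal gates of non-zero orientation. So it suffices to produce a monotone $f$ that admits a poly-log depth circuit with a single negation yet has no poly-log depth monotone circuit.

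By Lov\'asz, $\PMatch$ on $n$ vertices lies in non-uniform $\NC^2$; fix a circuit $C$ of poly-log depth and polynomial size computing it, with $s = s(n)$ negation gates. I would construct an input restriction $\rho$ with two properties: (a)~in $C|_\rho$, every negation gate but one receives a constant input from below, and (b)~the fixed edges all lie inside a vertex subset $U \subseteq [n]$ of size $o(n)$, chosen so that the restricted function $f := \PMatch|_\rho$ still contains $\PMatch$ on $[n]\setminus U$ as a sub-function (achieved, for example, by further fixing edges inside $U$ to complete a matching on $U$). Then $C|_\rho$ is a poly-log depth single-negation circuit for the monotone function $f$, and Raz--Wigderson applied to the embedded $\PMatch$ on $n-|U|$ vertices gives $\depthp(f) = \Omega(n-|U|)$, which is super-poly-log. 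Invoking Theorem~\ref{thm:negations-to-orientation} with $t=1$ on $C|_\rho$ then produces the desired poly-log depth circuit for $f$ with at most two gates of non-zero orientation.

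The main obstacle is the existence of such a restriction $\rho$: for each of the $s-1$ negations to be killed, $\rho$ must force the feeding sub-function to a constant, and a naive minimum-certificate strategy can cost up to $n^{\Omega(1)}$ fixings per negation, swamping the hardness. I would bypass this by either (i)~using non-uniformity to select $C$ among all $\NC^2$ circuits for $\PMatch$ so that each negation's sub-function has a short constant-forcing certificate concentrated on edges incident to a small vertex set, or (ii)~a random-restriction argument: pick $U \subset [n]$ of size $o(n)$ at random and assign each edge touching $U$ independently to a random value. A switching-lemma-style analysis of the poly-log depth sub-circuits would then show that with positive probability all but one of the negation sub-functions collapse to constants while the induced sub-problem on $[n]\setminus U$ remains $\PMatch$. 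The combinatorial bookkeeping needed to balance these two requirements is the real work of the proof.
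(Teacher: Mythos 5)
Your high-level plan is the right one: reduce to finding a monotone $f$ that has a poly-log-depth circuit with \emph{one} negation but no poly-log-depth monotone circuit, then apply the $t=1$ case of Theorem~\ref{thm:negations-to-orientation}. The paper does exactly this. However, your construction of such an $f$ has a genuine gap at its heart, and you essentially concede this yourself (``the combinatorial bookkeeping... is the real work of the proof''). The restriction-based strategy --- fixing edges incident to a small vertex set $U$ so that $s-1$ of the negation gates in an $\NC^2$ circuit for $\PMatch$ collapse to constants while the residual problem stays hard --- is not supported by any argument you give. There is no reason to expect that the sub-functions feeding the negation gates admit short constant-forcing certificates concentrated on edges touching a $o(n)$-sized vertex set; such a sub-function could easily depend densely on $\Theta(n^2)$ edges. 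Option~(i) (choose a favorable $C$ by non-uniformity) is unsubstantiated --- non-uniformity lets you pick among existing $\NC^2$ circuits for $\PMatch$, but gives no guarantee that any of them have this certificate structure. Option~(ii) (random restriction + switching lemma) is misapplied: switching lemmas give decision-tree collapse for $\AC^0$, not for poly-log-\emph{depth} bounded-fan-in circuits, whose sub-functions can be essentially arbitrary polynomial-size circuits.

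The paper sidesteps the restriction problem entirely with an iterative ``negation-peeling'' argument that never touches the input. Starting from a poly-log-depth circuit $C$ for $\PMatch$ with $t \leq \log n$ negations (via Fischer), take the topologically first negation gate with input sub-function $g_1$, replace that gate's output by a fresh variable $y_1$, and consider the two resulting circuits $C'_0, C'_1$ (with $y_1$ set to $0, 1$). Their functions $f_0, f_1$ need not be monotone, but the ``clamped'' versions $f'_0 = f_0 \vee g_1$ and $f'_1 = f_1 \wedge g_1$ are monotone, each has a poly-log-depth circuit with $t-1$ negations, and at least one of them must be monotone-hard: if both had poly-log-depth monotone circuits, then the selector identity $f = (g_1 \wedge f'_1) \vee (\bar{g_1} \wedge f'_0)$ would yield a poly-log-depth one-negation circuit for $f$, contradicting the (iteration-maintaining) assumption. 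Iterating at most $t$ times reaches a monotone function with a one-negation poly-log-depth circuit and no monotone poly-log-depth circuit, and then Theorem~\ref{thm:negations-to-orientation} with $t=1$ finishes. This is a cleaner and, crucially, a complete argument: it trades the unverifiable existence of a good restriction for a simple dichotomy that is provable by contradiction. Your write-up identifies the right endpoint but does not actually reach it.
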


\begin{proof}
  It is known~\cite{RazMatchingSuperLinearDepth} that $\PMatch$ does
  not have monotone circuits of poly-log depth. But if arbitrary
  negations are allowed then there is an $O(\log^2 n)$ depth circuit
  computing $\PMatch$~\cite{Lov79}. Monotone function $f$ claimed in
  the theorem is obtained from poly-log depth circuit $C$ computing
  $\PMatch$. Fischer's theorem guarantees that without loss of
  generality we can assume that $C$ has at most
  $\log n$ negations.

  If there is a poly-log depth circuit having exactly one negation
  computing $\PMatch$,
  then Theorem~\ref{thm:negations-to-orientation} can be applied to
  get a circuit of poly-log depth having at most two gates of
  non-zero orientation.  Otherwise, the circuit has $t \geq 2$ negations,
  and there is no poly-log depth circuit computing the same function
  with one negation.  Let $g_1$ denote the input to the first negation
  gate(in the topological sorted order) in $C$.  From $C$ obtain $C'$
  by replacing $g_1$ with a new variable, say $y_1$.  Let $C'_0$,
  $C'_1$ denote the circuits obtained by setting $y_1$ to $0$,$1$
  respectively.  The corresponding functions $f_0$, $f_1$ need not be
  monotone. Hence we define monotone functions $f'_0,f'_1$ from $f_1,
  f_0$ :
  \begin{eqnarray*}
    f'_0 \left( x \right) &=& f_0 \left( x \right) \vee g_1 \left( x
    \right) \\
    f'_1 \left( x \right) &=& f_1 \left( x \right) \wedge g_1 \left( x \right)
  \end{eqnarray*}
  When $g_1 \left( x \right) = 0$, $f_0 \left( x \right) = f \left( x
  \right)$ and when $g_1 \left( x \right) = 1$, $f'_0 \left( x \right) =
  1$. Hence $f'_0 $ is monotone. A similar argument can be used to
  establish that $f'_1$ is monotone. Note that both $f'_0,f'_1$ have
  poly-log depth circuits computing it with at most $t-1$ negation
  gates.

  We claim that one of $f'_0, f'_1$ does not have a monotone circuit
  of poly-log depth.  Otherwise from poly-log depth monotone circuits
  computing $f_0', f_1'$ and the monotone circuit of poly-log depth
  computing $g_1 $ we can get a poly-log depth circuit computing $f$
  with one negation : use $\overline{g_1}(x)$ as a selector to select
  $f_1' \left( x \right) $ or $f_0' \left( x \right)$ as which is
  appropriate. This circuit computes $f$ because, by definition,
  $(g_1(x) \wedge f'_1 (x)) \vee ( \overline{g_1}(x) \wedge f'_0 (x))
  = f(x)$.
  This contradicts our assumption that there is no circuit
  of poly-log depth computing $f$ with one negation.

  Applying the procedure once, we get a monotone function $f'$ which
  has a $t-1$ negation poly-log depth circuit computing it, but it has
  no monotone circuit of poly-log depth computing it.  If the function
  $f'$ has a poly-log depth circuit with one negation then
  Theorem~\ref{thm:negations-to-orientation} can be applied to get the
  desired function.  Otherwise apply the procedure on $f'$ as $f'$ is
  a monotone function which does not have any poly-log depth circuit
  with at most one negation computing it.  Applying the procedure at
  most $t$ ($t \leq \log n$) times we get to a monotone function $f'$
  having a poly-log depth circuit with one negation, but has no
  monotone poly-log depth circuit computing it. Applying
  Theorem~\ref{thm:negations-to-orientation} on the one negation
  circuit gives a poly-log depth circuit with at most two gates of
  non-zero orientation.
\end{proof}

This theorem combined with the ``sparse'' orientation protocol implies
that the two non-zero orientations $\beta_1,\beta_2$ is such that
$|\beta_1|+|\beta_2|$ is not only non-zero but is super
poly-log. Because our protocol will spend $|\beta_1|+|\beta_2|$ for
handling these two gates, and on the remaining gates in the circuit it
will spend $1$ bit each. Hence the cost of the sparse orientation
protocol will be at most $|\beta_1|+|\beta_2|+\depth(C)$. Thus
$|\beta_1|+|\beta_2|$ is at least $\KWP(f) - \depth(C)$ which is super
poly-log as $\depth(C)$ is poly-log and $\KWP(f)$ is super
poly-log.

\begin{remark}
By Theorem~\ref{thm:2dense} we get a function which has an $\NC^2$
circuit with two non-zero orientation gates which has no monotone
circuit of poly-log depth. Thus our bounds cannot be strengthened to
handle higher weight 
 without incorporating the specifics of the function being
computed.  In section~\ref{sec:LbClique}, we rescue the situation
slightly using the specific properties of the $\Clique$ function.
\end{remark}

\begin{remark}
  The proof of Theorem~\ref{thm:2dense} also implies that there is a
  monotone function $f$ (not explicit) such that there is a one
  negation circuit in $\NC^2$ computing it, but any monotone circuit
  computing $f$ requires super-poly-log depth.
\end{remark}

\subsection{Lower Bounds for \Clique\ function}
\label{sec:LbClique}

The number of gates with high orientations can be arbitrary in
general. In this subsection we give a proof for
Theorem~\ref{introthm:am}.  We first extend our technique to handle
the low weight negations efficiently so that we get a circuit on high
weight negations (see
Lemma~\ref{lemma:limitednegationsarbitorientaiton} below). To complete
the proof of Theorem~\ref{introthm:am}, we appeal to depth lower
bounds against negation-limited circuits computing $\Clique(n,n^{\frac{1}{6}\alpha})$. 
\begin{lemma}
  \label{lemma:limitednegationsarbitorientaiton}
  For any circuit family $\mathcal{C}=\set{C_n}$ computing a monotone
  function $f$ where there are $k$ negations in $C_n$ computing
  functions which are sensitive only on $2w$ inputs bits (i.e., the
  orientation of their input as well as their output is at most $w$)
  with $kw \leq \frac{n}{8}$ and the remaining $\ell$ negations
  compute functions of arbitrary orientation: $\depth(C_n) \geq
  \depth_{\ell}(\Clique(\frac{3n}{4},n^{\frac{1}{6}\alpha})) $
\end{lemma}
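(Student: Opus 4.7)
The plan is to exhibit a restriction $\rho$ of the input that forces the input function feeding into each of the $k$ low-oriented negation gates to become a constant; once simplified, the restricted circuit has at most $\ell$ negations and computes $\Clique$ on a graph on $3n/4$ vertices. The desired bound on $\depth(C_n)$ then follows because $\depth(C_n)$ upper bounds the depth of the restricted circuit, which in turn is at least $\depth_{\ell}$ of the smaller clique problem.

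First I would apply Lemma~\ref{lemma:NegationSensitivity} to each of the $k$ low-oriented negations. For such a gate, the input function $g$ has orientation $\beta$ and its output $\bar g$ has orientation $\gamma$, with $|\beta|, |\gamma| \leq w$. The lemma guarantees that $g$ is sensitive only on indices in $\beta \lor \gamma$, which is a set of at most $2w$ edges of the input graph. Taking the union $E^{*}$ of these sensitive edge sets over all $k$ low-oriented gates gives $|E^{*}| \leq 2kw \leq n/4$ by the hypothesis $kw \leq n/8$.

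Next I would pick a vertex set $T \subseteq [n]$ by choosing one endpoint of each edge in $E^{*}$, making $T$ a vertex cover of $E^{*}$ of size $|T| \leq |E^{*}| \leq n/4$. Let $\rho$ be the restriction that sets every edge incident to $T$ to $0$. Since every edge of $E^{*}$ has an endpoint in $T$, $\rho$ fixes all sensitive bits of each of the $k$ input functions $g$, so each $g$ becomes constant under $\rho$ and hence so does the corresponding negation $\bar g$. Replacing those $k$ gates by their constant values yields a circuit $C'$ of depth at most $\depth(C_n)$ containing at most $\ell$ negations. Because $\rho$ isolates every vertex of $T$, no clique in the restricted graph can use a vertex of $T$; hence $f|_{\rho}$, viewed as a function of the edges among the $3n/4$ vertices in $[n]\setminus T$, is exactly $\Clique(3n/4, n^{\frac{1}{6}\alpha})$. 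Therefore $C'$ is a circuit with at most $\ell$ negations computing $\Clique(3n/4, n^{\frac{1}{6}\alpha})$, giving
\[
\depth(C_n) \;\geq\; \depth(C') \;\geq\; \depth_{\ell}\bigl(\Clique(3n/4, n^{\frac{1}{6}\alpha})\bigr).
\]

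The main obstacle is the first step: a bound on orientation weight does not a priori constrain the sensitivity of the function computed at a gate, and without a sensitivity bound the restriction strategy cannot hope to kill the gate. Lemma~\ref{lemma:NegationSensitivity} closes exactly this gap by showing that the input to a negation is sensitive only on $\beta \lor \gamma$. Once this is in hand, the rest is a routine vertex-cover accounting together with the observation that zeroing every edge incident to $T$ makes $f|_{\rho}$ isomorphic to $\Clique$ on $[n]\setminus T$; a minor point is that replacing now-constant gates with constants can only decrease depth and cannot introduce new negations, so the negation count of $C'$ is at most $\ell$.
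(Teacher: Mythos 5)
Your proof is correct and follows essentially the same route as the paper: restrict the input by zeroing all edges incident to a small vertex set so that the $k$ low-sensitivity negation inputs become constant, yielding a circuit with at most $\ell$ negations computing $\Clique$ on the remaining $\geq 3n/4$ vertices. The one (minor but useful) refinement you add is to take a \emph{vertex cover} $T$ of the sensitive edge set $E^{*}$ (one endpoint per edge, so $|T|\leq|E^{*}|$) rather than the set of \emph{all} endpoints; this keeps the deleted vertex set at most $2kw\leq n/4$ even when every one of the $k$ negations is sensitive on the full $2w$ edges, whereas the paper's accounting implicitly uses the tighter $w$-edges-per-negation reading (as in the footnote to Theorem~\ref{introthm:am}) and the all-endpoints bound $2kw$. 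Either variant suffices; yours is robust to the larger constant in the lemma statement as written. One small housekeeping point worth stating explicitly: if $|T|<n/4$ you should pad $T$ with extra vertices (zeroing their incident edges too) so that exactly $3n/4$ vertices remain and the restricted function is literally $\Clique(3n/4,n^{\frac{1}{6}\alpha})$ rather than $\Clique$ on a slightly larger vertex set.
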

\begin{proof}

  Since $k$ negations of $C_n$ are depended only on $kw$ inputs (i.e,
  edges) the number of vertices which has at least one of its edges
  indexed by one of the $k$ negations is $2kw$. Let this set of
  vertices be $S$ and $|S|\leq \frac{n}{4}$. In $C_n$ set input
  variables corresponding to edges in $\binom{S}{2}$ and the variables
  corresponding to edges between $S$ and $[n]\setminus S$ to $0$. Note
  that the circuit $C'_n$ obtained from $C_n$  by this restriction 
  computes $\Clique(\frac{3n}{4},n^{\frac{1}{6}\alpha})$.
  % It
  % is easy to verify that the circuit $C'_n$ obtained by this
  % restriction from $C_n$ computes
  % $\Clique(\frac{3n}{4},n^{\frac{1}{6}\alpha})$. 
  Note that all the $k$
  negations which are sensitive only on edges indexed by
  $\binom{S}{2}$ is fixed to constans as $\binom{S}{2}$ is
  fixed. Hence $C'_n$ has at most $\ell$ negations. Hence the theorem.
\end{proof}
By a straight forward application of technique used in
\cite{amano2005superpolynomial} to prove size lower bounds against
circuits with limited negations computing $\Clique(n,n^{\frac{1}{6} \alpha})$ we obtain the size version of
following lemma (For completeness, we include the relevant part in the
Appendix ~\ref{app:am}).

\begin{lemma}
  \label{lemma:depthnegtradeoff}
  For any circuit $C$ computing $\Clique(n,n^{\frac{1}{6\alpha}})$
  with $\ell$ negations where $\ell \leq 1/6 \log \log n$ and
  $\alpha=2^{\ell+1}-1$,
  \begin{eqnarray*}
    \depth_{\ell}(f) &\geq& n^{\frac{1}{81 \alpha}}
  \end{eqnarray*}
\end{lemma}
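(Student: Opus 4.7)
The plan is to reduce the depth lower bound to a size lower bound and then invoke the approximation-method argument of Amano and Maruoka~\cite{amano2005superpolynomial} (the relevant adaptation is reproved in Appendix~\ref{app:am}). First I would recall the elementary fact that any bounded fan-in single-output Boolean circuit of depth $d$ has at most $2^{d+1}-1$ distinct internal gates: after pruning gates not reachable from the output, every gate sits at some distance $k \leq d$ from the output, and the number of gates at distance exactly $k$ is bounded by the number of length-$k$ directed paths from the output, namely $2^k$. Consequently, to prove $\depth_\ell(f) \geq n^{1/(81\alpha)}$ it suffices to establish a size lower bound of the form
\[
\size_\ell\bigl(\Clique(n, n^{1/(6\alpha)})\bigr) \;\geq\; 2^{\,n^{1/(81\alpha)}+1}.
\]

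The core of the proof is this size lower bound, which I would obtain by running the approximation method of~\cite{amano2005superpolynomial} in our parameter regime. Briefly, one maintains at every gate of a hypothetical circuit an approximator drawn from a family of bounded-width CNFs of clique indicators, and tracks, gate by gate, the approximator's error against the standard positive test distribution (planted $k$-cliques, $k = n^{1/(6\alpha)}$) and negative test distribution (proper $(k-1)$-colourings of the vertex set, which contain no $k$-clique). The gate-by-gate analysis handles $\land$ and $\lor$ by a sunflower-style argument and handles each of the $\ell$ negation gates by a Markov-type argument that doubles the relevant width parameter at each negation; this doubling is precisely the source of the $\alpha = 2^{\ell+1}-1$ factor appearing in the lemma. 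Once the width and error budgets are tuned so that no approximator in the family can distinguish the two distributions, a counting argument yields that any such circuit must have size at least $2^{\Omega(n^{1/(81\alpha)})}$, which exceeds the threshold demanded above.

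The main obstacle is not conceptual but arithmetic: one has to track the error accumulation through the $\land/\lor$ layers, compose it with the per-negation width-doubling, and balance these against the initial choice of sunflower width, so that the exponent in the size bound is provably at least $n^{1/(81\alpha)}$. The constants $6\alpha$ (clique size) and $81\alpha$ (exponent) in the statement of the lemma arise exactly from this balancing, and the detailed calculation is carried out in Appendix~\ref{app:am}. No new idea beyond~\cite{amano2005superpolynomial} is required, which is what makes the adaptation ``straightforward'' modulo the bookkeeping; the only novelty is the final size-to-depth step described in the first paragraph, which converts their size bound into the depth bound asserted by the lemma.
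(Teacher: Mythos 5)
Your overall strategy matches the paper's exactly: convert the depth bound to a size bound using the fact that a single-output fan-in-$2$ circuit of depth $d$ has at most $2^{d+1}-1$ live gates, and then invoke the Amano--Maruoka size lower bound, re-run with the parameters in the statement. The size-to-depth step is correct and is precisely what the paper does (it states the implication without the off-by-one bookkeeping, which is harmless here).

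However, your inline summary of where the Amano--Maruoka machinery produces $\alpha = 2^{\ell+1}-1$ is not how the argument actually works, and if you tried to execute it as written you would be proving a different lemma. You describe a gate-by-gate Razborov-style approximator replacement in which each negation gate triggers a ``Markov-type width-doubling,'' giving the exponential-in-$\ell$ factor. In fact, Amano--Maruoka handle the $\ell$ negations in a single global combinatorial step \emph{before} any approximation: their boundary-covering lemma (Theorem 3.2 in \cite{amano2005superpolynomial}, reproduced here as Lemma~\ref{lemma:boundarycovering}) shows that the boundary graph $G(f) = \{(u,v) : \Delta(u,v)=1,\ f(u) \neq f(v)\}$ of a function computed by a circuit with $t$ negations is covered by the boundary graphs of at most $\alpha = 2^{t+1}-1$ \emph{monotone} functions, each of monotone circuit size at most $\size_t(f)$ (these are obtained from restrictions of the negation-input gates and the output gate). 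Only after this reduction is the approximation-method lemma (Theorem 4.1, ``no small monotone circuit can simultaneously accept good graphs and reject bad graphs'') applied, and it is applied to each of the $\alpha$ monotone functions, together with a union-bound-style iteration (Claim 5.3 / Equation~\ref{eqn:jumpingk}) over the $\alpha$-length chain of intermediate ``layers'' $l_0 < l_1 < \dots < l_\alpha$. The constants $m^{1/(6\alpha)}$ and $m^{1/(81\alpha)}$ come from balancing the layer spacing $s^{1/3}l_{i-1} \leq l_i$ against the requirement $c_1 c_2 c_3 \leq s^{1/3}/8$ across these $\alpha$ iterations, subject to $\ell \leq \frac{1}{6}\log\log m$. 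So the skeleton you propose is the right one and coincides with the paper's, but be careful: there is no per-negation approximator bookkeeping anywhere in the argument; negations are dispatched up front by the boundary-covering reduction to monotone functions, and the approximation method runs entirely in the monotone world afterward.
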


Combining Lemma~\ref{lemma:limitednegationsarbitorientaiton} and
Lemma~\ref{lemma:depthnegtradeoff} completes the proof of
Theorem~\ref{introthm:am}.

%%% Local Variables: 
%%% mode: latex
%%% TeX-master: "main"
%%% End: 

\section{Structural Restrictions on Orientation}

In this section we study structural restrictions on the orientation
and prove stronger lower bounds.

\subsection{Restricting the Vertex Set indexed by the Orientation}

We first consider restrictions on the set of vertices\footnote{Notice
  that the input variables to the \Clique\ function represents the
  edges. This makes the results of this section incomparable with the
  depth lower bounds of \cite{Raz89probabilisticcommunication}.}
indexed by the orientation - in order to prove
Theorem~\ref{thm:vertex-index-beta} stated in the introduction. As in
the other case, we argue the following lemma, which establishes the
trade-off result. By using the lower bound for $\KWP$ games for
$\Clique$ function, the theorem follows.
\begin{lemma}
  Let $C$ be a circuit of depth $d$ computing $\Clique$, with each gate computing a function whose orientation 
  is such that the number of vertices of the input graph   
  indexed by the orientation $\beta$ is at most $\frac{w}{\log n}$, then  
$d$ is $\Omega\left(\frac{\KWP(f)}{4w+1}\right)$.
\end{lemma}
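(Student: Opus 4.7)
The plan is to mimic the protocol of the main sparse-orientation lemma but replace the step where Alice (resp. Bob) transmits the entire substring $x'|_S$ (resp. $y'|_S$) --- which, for $S$ consisting of all edges inside a $k := w/\log n$ vertex set, would cost up to $\binom{k}{2}$ bits --- by a vertex-level message that exploits the minterm/maxterm structure of $\Clique$. Concretely, restrict the $\KWP(\Clique)$ game to Alice holding a clique $K_A \subseteq [n]$ of size $n/2$ (encoding the minterm $x \in \Clique^{-1}(1)$ with $x_e = 1 \iff e \subseteq K_A$) and Bob holding a proper $(n/2{-}1)$-coloring $c : [n] \to [n/2{-}1]$ (encoding the maxterm $y_{\set{u,v}} = 1 \iff c(u) \neq c(v)$). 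The $\Omega(\KWP(\Clique))$ lower bound holds already on this restricted class of inputs, and on such inputs, for any vertex set $V \subseteq [n]$, the induced subgraph $x|_{\binom{V}{2}}$ is determined by the $|V|$-bit indicator $\mathbbm{1}_{K_A \cap V}$, while $y|_{\binom{V}{2}}$ is determined by the $|V|\log(n/2)$-bit string $c|_V$.

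I then run the main-lemma protocol with the following change. At each gate $g$, let $V_\alpha$ be the union of the vertex sets touched by the orientations of $g$, its two children, and (if applicable) the inputs to negation children; by hypothesis $|V_\alpha| \le 4w/\log n$. At an $\land$-gate Alice sends $\mathbbm{1}_{K_A \cap V_\alpha}$ (at most $4w/\log n$ bits); at an $\lor$-gate Bob sends $c|_{V_\alpha}$ (at most $4w$ bits); in either case one further bit is exchanged for the descent child. Both parties maintain a shared modification log recording, for every previously visited gate, the overwrites applied to the edges of that round's $V_\alpha$ (Bob overwrites $y'|_{V_\alpha}$ to match $x'|_{V_\alpha}$ at $\land$-gates; Alice overwrites $x'|_{V_\alpha}$ to match $y'|_{V_\alpha}$ at $\lor$-gates). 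Since each such overwrite is a deterministic function of the single vertex-level message transmitted that round, no additional communication is needed to keep the log synchronized: at any later round Bob reconstructs $x'|_{\binom{V_\alpha}{2}}$ by using $\mathbbm{1}_{K_A \cap V_\alpha}$ on unmodified edges and reading the log on modified edges, and Alice symmetrically reconstructs $y'|_{\binom{V_\alpha}{2}}$.

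Correctness follows from the main lemma almost verbatim: the same invariant is preserved (agreement of $x'$ and $y'$ on edges indexed by the current gate's orientation, $f(x')=1$, $f(y')=0$, monotone restriction at the fixed edges), the argument that the modification forces $f(y'')=0$ (or $f(x'')=1$) on one child goes through on the reconstructed $x'|_{V_\alpha}$ and $y'|_{V_\alpha}$, and Lemma~\ref{lemma:NegationSensitivity} again rules out descent into a negation child where $x'$ and $y''$ coincide on the relevant indices. Each round costs at most $4w+1$ bits, so $d(4w+1) \ge \Omega(\KWP(\Clique))$, giving the claimed $d = \Omega(\KWP(f)/(4w+1))$. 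The main technical obstacle is the bookkeeping: verifying that the modification log stays fully synchronized across rounds under the compressed per-round messages, and that overwrites happen only on edges where $x'$ and $y'$ already agreed after the move --- so that the leaf's witnessing edge is never among the modified edges, and the protocol's output is thus a valid solution for the original inputs $x,y$ as in the main lemma.
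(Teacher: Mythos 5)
Your proposal is correct and takes essentially the same route as the paper: restrict to min-term/max-term pairs (Alice holds a clique, Bob a proper coloring / complete multipartite graph), replace the per-round transmission of the full edge-substring $x'_S$ or $y'_S$ by a vertex-level message (a $\le 4w/\log n$-bit clique-membership indicator from Alice, a $\le 4w$-bit vertex-to-color map from Bob), and have both parties inductively maintain the modifications applied to each other's inputs so that the invariant from the main lemma carries over. Your explicit ``modification log'' is just a more detailed account of what the paper compresses into ``inductively they maintain that they know of the changes made to the other party's input in each round,'' so the two arguments are the same.
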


\begin{proof}
  It is enough to solve the $\KWP(f)$ on the min-term, max-term pairs
  which in case of $\Clique(n,k)$ is a $k$-clique and a complete
  $k-1$-partite graph. We play the same game as in the proof of
  Theorem~\ref{introthm:mainlb}, but instead of sending edges we send
  vertices included in the edge set indexed by $\beta$ with some
  additional information. If it is Alice's turn, then $x'_{\beta}$
  defines an edge sub-graph of her clique. Both Alice and Bob know
  $\beta$ and hence knows which vertices are spanned by edges
  $e_{u,v}$ such that $\beta_{e(u,v)}=1$. So Alice can send a bit
  vector of length at most $w$ (in the case of Alice we can handle up
  to $w$), indicating which of these vertices are part of her
  clique. This information is enough for Bob to deduce whether any
  $e_{u,v}$ indexed by $\beta$ is present in Alice's graph or
  not. Since Bob makes sure that $x'_{\beta}=y'_{\beta}$ by modifying
  his input, and Alice keeps her input unchanged, Alice knows what
  modifications Bob has done to his graph.

  Similarly on Bob's turn, he sends the vertices in the partition
  induced by $y_{\beta}$ and the partition number each vertex belongs
  to (hence the $\log n$ overhead for Bob) to Alice. With this
  information Alice can deduce whether any $e_{u,v} \in \beta$ is
  present in Bob's graph or not. Inductively they maintain that they
  know of the changes made to other parties input in each round. Hence
  the game proceeds as earlier. This completes the proof of the
  theorem.
\end{proof}

\subsection{Restricting the Orientation to be Uniform}

In this section, we consider the circuits where the orientation is
uniform and study its structural restrictions. We proceed to the proof
of Theorem~\ref{thm:uniform-beta-lb}.
\begin{theorem-a}{\ref{thm:uniform-beta-lb}}
  Let $C$ be a circuit computing the $\Clique$ function with uniform
  orientation $\beta \in \{0,1\}^n$ such that there
  is subset of vertices $U$ and $\epsilon > 0$ such that $|U| \ge
  \log^{k+\epsilon} n$ for which $\beta_e = 0$ for all edges $e$
  within $U$, then $C$ must have depth $\omega(\log^k n)$.
\end{theorem-a}
\begin{proof}
  We prove by contradiction. Suppose there is a circuit $C$ of depth
  $c\log^k n$. In the argument below we assume $c=1$ for
  simplicity. Without loss of generality, we assume that $|U| =
  \log^{k+\epsilon }n$. Fix inputs to circuit $C$ in the following way:
  \begin{itemize}
  \item 
    Choose an arbitrary $K_{\frac{n}{2}-\frac{|U|}{2}}$ comprising of
    vertices from $[n] \setminus U$ and set those edges to $1$.
  \item 
    For every edge in $\binom{[n] \setminus U}{2}$ which is
    not in the clique chosen earlier, set to $0$.
  \item 
    For every edge between $[n] \setminus U$ and $U$ set
    it to $1$.
  \end{itemize}
  Since every edge $e(x,y)$ which has $\beta_e=1$ has at least one of
  the end points in $[n] \setminus U$, by above setting, all those
  edges are turned to constants. Thus we obtain a monotone circuit
  $C''$ computing $\Clique(|U|,\frac{|U|}{2})$ of depth at most $(\log
  n)^k$. In terms of the new input, $(\log n)^k=((\log
  n)^{k+\epsilon})^{\frac{k}{k+\epsilon}} =
  (|U|)^{\frac{k}{k+\epsilon}}$, this contradicts the
  Raz-Wigderson~\cite{RazMatchingSuperLinearDepth} lower bound of
  $\Omega(|U|)$, as $\frac{k}{k+\epsilon} < 1$ for $\epsilon > 0$.
  \end{proof}

  Note that for Clique function, with the above corollary we can
  handle up to weight $\frac{n^2}{(\log n)^{2+2\epsilon}}$ if the
  vertices spanned by $\beta$ is up to $\frac{n}{(\log
    n)^{1+\epsilon}}$ and still get a lower bound of $(\log
  n)^{1+\epsilon}$. This places us a little bit closer to the goal of
  handling $\beta$ of weight $n^2$, from handling just $(\log
  n)^{1+\epsilon}$.\\ {\bf A contrasting picture:} Any function has a
  circuit with a uniform orientation $\beta=1^{n}$ ($|\beta|=n$). We
  show that the weight of the orientation can be reduced at the
  expense of depth, when the circuit is computing the $\Clique$
  function.\\
  
  \begin{theorem-a}{~\ref{thm:betareduction}}
    If there is a circuit $C$ computing $\Clique$ with depth $d$ then
for any set of $c\log n$ vertices $U$, there is an equivalent circuit
$C^{'}$ of depth $d+ c \log n$ with orientation $\beta$ such that none
of the edges $e(u,v),\, u,v \in U$ has $\beta_{e(u,v)}=1$.
  \end{theorem-a}
  \begin{proof}
    The proof idea is to devise a $\KW$ protocol based on circuit $C$
    such that for $e(u,v)$ where $u,v \in U$ the protocol is
    guaranteed to output in the monotone way, i.e., $x_{e(u,v)}=1$ and
    $y_{e(u,v)}=0$. The modified protocol is as follows:
    \begin{itemize}
    \item Alice chooses an arbitrary clique $K_\frac{n}{2} \in G_x$ (which she is
      guaranteed to find as $x\in f^{-1}(1)$). She then obtains $x'$
      by deleting edges $e(x,y)$ from $\binom{U}{2}$ which are
      outside the chosen clique $K_{\frac{n}{2}}$. Note that since $K_\frac{n}{2}
      \in G_{x'}$, $f(x')=1$.
    \item Alice then sends the characteristic vector of vertices in
      $K_\frac{n}{2} \cap U$ which is of length at most $c \log
      n$ to Bob. 
    \item Bob then obtains $y'$ from $y$ by removing edges
      in $\binom{U}{2}$ which are outside the clique formed by
      $K_\frac{n}{2} \cap \binom{U}{2}$. By monotonicity of $\Clique$
      $f(y')=0$.
    \item If there is an edge $e(u,v) \in K_\frac{n}{2} \cap \binom{U}{2}$
      which is missing from $y'$ Bob outputs the index
      $e(u,v)$. Otherwise they run the standard Karchmer-Wigderson game
      on $x',y'$ using the circuit $C$ to obtain an $e(x,y)$ such
      that $e(x,y)$ is exclusive to either $G_{x'}$ or $G_{y'}$.
    \end{itemize}

    The cost of the above protocol is $d + c \log n$.  For any $e(u,v)
    \in E(G) \setminus \binom{U}{2}, x'_{e(u,v)}=x_{e(u,v)}$ and
    $y'_{e(u,v)}=y_{e(u,v)}$.  The protocol never answers
    non-monotonically($i,x'_i=0, y'_i=1$) for an edge $e(u,v)$ with
    $u,v \in U$.  Because our protocol ensures that for any $e \in
    \binom{U}{2}$ , $x'_{e} \geq y'_{e}$, ruling out such a
    possibility. By the connection between $\KW(f)$ and circuit depth,
    we get a circuit having desired properties.
  \end{proof}

Thus we get the following corollary.
\begin{corollary}
If there is a circuit $C \in \NC^{k}$ computing $\Clique(n,k)$, then
there is a circuit $C^{'} \in \NC^{k}$ of uniform orientation $\beta$
computing $\Clique(n,k)$ such that there are $(c \log n)^k$ vertices
$V^{'}$ with none of the edges $e(u,v)$%,\, u,v \in V^{'}$
 having $\beta_{e(u,v)}=1$.
\end{corollary}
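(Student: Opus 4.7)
The plan is to observe that the modified Karchmer–Wigderson protocol given in the proof of Theorem~\ref{thm:betareduction} applies verbatim when the vertex set $U$ of size $c \log n$ is replaced by an arbitrary vertex set $V'$ of size $(c \log n)^k$, because the only way $|U|$ enters the analysis is as the length of the characteristic vector Alice sends in the opening round. What makes the corollary go through at level $\NC^k$ (rather than only $\NC^1$) is that the additional $(c \log n)^k$ bits of communication are exactly $O(\log^k n)$, which is still within the $\NC^k$ depth budget.

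Concretely, I would take the hypothesized circuit $C \in \NC^k$ of depth $c_1 \log^k n$ computing $\Clique(n,k)$, and fix an arbitrary vertex subset $V' \subseteq [n]$ with $|V'| = (c \log n)^k$. I would then run the protocol of Theorem~\ref{thm:betareduction} with $V'$ in place of $U$: Alice picks an arbitrary $K_{n/2} \subseteq G_x$, obtains $x'$ by deleting those edges of $\binom{V'}{2}$ that are not inside $K_{n/2}$, and transmits the characteristic vector of $K_{n/2} \cap V'$ to Bob (of length $|V'|$). Bob modifies his input to $y'$ by deleting exactly the same edges inside $\binom{V'}{2}$, outputs any edge in $K_{n/2} \cap \binom{V'}{2}$ missing from $y'$ if one exists, and otherwise they finish by running the standard KW protocol for $f$ using $C$ on the inputs $x',y'$.

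The correctness analysis is identical to that of Theorem~\ref{thm:betareduction}: for every edge $e \in \binom{V'}{2}$ the preprocessing enforces $x'_e \geq y'_e$, so the protocol never answers on such an $e$ in the non-monotone direction ($x'_e = 0,\, y'_e = 1$); and for every edge $e \notin \binom{V'}{2}$ the inputs are unchanged ($x'_e = x_e,\, y'_e = y_e$), so a standard KW answer there is a valid KW answer for the original pair $(x,y)$. The total communication is $|V'| + \depth(C) = (c\log n)^k + c_1 \log^k n = O(\log^k n)$. Invoking the Karchmer–Wigderson characterization in the form used in the proof of Theorem~\ref{thm:betareduction} yields a circuit $C' \in \NC^k$ of uniform orientation $\beta$ whose negated inputs are exactly the edges on which the protocol can answer non-monotonically, so $\beta_{e(u,v)} = 0$ for every $u,v \in V'$.

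No step is really an obstacle; the whole argument is a direct specialization of Theorem~\ref{thm:betareduction}. The only point that needs a small sentence of explanation is the one quantitative observation that drives the corollary, namely that in the $\NC^k$ setting we may afford a prefix protocol of length $(c \log n)^k$ without leaving $\NC^k$ — a phenomenon that simply does not occur at level $\NC^1$, which is why the theorem was stated for $|U| = c \log n$ and the corollary gains a factor of $\log^{k-1} n$ in the size of the vertex set.
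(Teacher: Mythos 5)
Your proposal is correct and matches the paper's own argument essentially verbatim: the paper likewise observes that one simply reruns the protocol of Theorem~\ref{thm:betareduction} over a vertex set $V'$ of size $(c\log n)^k$, noting that the resulting communication cost $(c\log n)^k + O(\log^k n) = O(\log^k n)$ still keeps the derived circuit in $\NC^k$.
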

\begin{proof}
 It follows by setting $d= O((\log n)^k)$ and modifying the protocol
 to work over a $V^{'}$ of size $(c \log n)^k$. The analysis and
 proof of correctness of the protocol remains the same, but the
 communication cost becomes $O((\log n)^k)+(c \log n)^k=O((\log  n)^k)$.
\end{proof}
In other words, if we improve Theorem~\ref{thm:vertex-index-beta} to
the case when the orientation ``avoids'' a set of $\log n$ vertices
(instead of $(\log n)^{(1+\epsilon)}$ as done), it will imply $\NC^1
\ne \NP$.

%%% Local Variables: 
%%% mode: latex
%%% TeX-master: "main"
%%% End: 

\section{Discussion and Open Problems}
In this work, we studied lower bounds against non-monotone circuits with a new measure of non-monotonicity - namely the orientation of the functions computed at each gate of the circuit.
As the first step, we proved that the lower bound can be obtained by modifying the Karchmer-Wigderson game. We studied the weight of the orientation of the functions at internal gates as a parameter of the circuit, and explored the usefulness of densely oriented gates. We also showed the connections between negation limited circuits and orientation limited circuits. A main open problem that arises from our work is to improve upon the weight restriction of the orientation vector ($\Omega(\frac{n}{\log n})$) for which we can prove depth lower bounds. 
%A potential avenue for improvement as we see it, is to use randomization to improve the dependence of the communication complexity on the weight restriction. This extracts the best out of the randomized communication complexity lower bounds that is applied in the Karchmer-Widgerson framework at the end to obtain the depth lower bounds.

\section{Acknowledgements}
We thank the anonymous referees for the useful comments.
\bibliographystyle{plain} 
\bibliography{references}
\newpage

\appendix

\noindent {\Large \bf Appendix}

\section{Proof of Lemma~\ref{lemma:depthnegtradeoff} - Choice of
  parameters in~\cite{amano2005superpolynomial}}
\label{app:am}

In this section we give the arguments for
Lemma~\ref{lemma:depthnegtradeoff}.  Since the trade-off result stated
in Lemma~\ref{lemma:depthnegtradeoff} is not explicitly stated and
proved in~\cite{amano2005superpolynomial}, in this section, we present
the relevant part of the proof technique in
\cite{amano2005superpolynomial} with careful choice of parameters
obtaining the trade-off. For
consistency with notation used in~\cite{amano2005superpolynomial},
\textit{for the remainder of this section we will be denoting the
  number of vertices in the graph by $m$.}

The main idea in~\cite{amano2005superpolynomial} is to consider the
boundary graph of a function $f$, defined as $G_f=\set{(u,v)|
  \Delta(u,v)=1, f(u)\neq f(v)}$ where $\Delta(u,v)$ is the hamming
distance. They prove that if there is a $t$ negations circuit $C$
computing $f$ then the boundary graph $f$ must be covered by union of
boundary graphs of $2^{t+1}$ functions obtained by replacing the
negations in $C$ by variables and considering the input functions of
$t$ negation gates and the output gate where the negations in the
sub-circuit considered are restricted to constants.

They prove that,

\begin{lemma}{~\cite[Theorem
    3.2]{amano2005superpolynomial}}\label{lemma:boundarycovering}
  Let $f$ be a monotone function on $n$ variables. For any positive
  integer $t$,
  \begin{displaymath}
    \size_t(f) \geq \underset{F'=\set{f_1,\dots,f_\alpha}
      \subseteq \mathcal{M}^n}{\min}\set{\underset{f'\in F'}{\max
        \set{size_{mon}(f')}} \mid \bigcup_{f'\in F'} G(f') \supseteq G(f)}
  \end{displaymath}
  where $\alpha=2^{t+1}-1$ and $G(f')$ denotes the boundary graph of the
  function $f'$.
\end{lemma}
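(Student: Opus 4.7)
The plan is to take an optimal circuit $C$ of size $s = \size_t(f)$ that computes $f$ with $t$ negations, and exhibit a family $F' \subseteq \mathcal{M}^n$ of $\alpha = 2^{t+1}-1$ monotone functions such that each $f' \in F'$ has $\size_{mon}(f') \le s$ and $\bigcup_{f' \in F'} G(f') \supseteq G(f)$. Such an $F'$ will witness that the displayed right-hand minimum is at most $s$, giving the lemma.

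To build $F'$, I will fix the negation gates of $C$ in topological order and let $g_1,\dots,g_t$ be their input gates. For each $i \in [t]$ and each $b \in \{0,1\}^{i-1}$, let $g_{i,b}$ be the function computed by the sub-circuit of $C$ rooted at $g_i$ after hard-wiring the outputs $\bar{g}_1,\dots,\bar{g}_{i-1}$ of the earlier negation gates to the constants $1-b_1,\dots,1-b_{i-1}$. For each $b \in \{0,1\}^t$, let $f_b$ be the function computed by $C$ after hard-wiring all $\bar{g}_j$ to $1-b_j$. The total count is $\sum_{i=1}^t 2^{i-1} + 2^t = 2^{t+1}-1 = \alpha$. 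Since each such restriction eliminates every negation gate inside the relevant sub-circuit, $g_{i,b}$ and $f_b$ are monotone and each inherits a monotone circuit of size at most $s$ from $C$.

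The main step, and the only non-routine one, is the covering claim $G(f) \subseteq \bigcup_{f' \in F'} G(f')$. Given a Hamming-adjacent pair $(u,v)$ with $f(u) \neq f(v)$, I will evaluate $C$ on $u$ and on $v$ and record the vectors $b^u, b^v \in \{0,1\}^t$ of values taken at $g_1,\dots,g_t$. If $b^u = b^v$, then $C$ effectively runs its monotone restriction $f_{b^u}$ on both inputs, so $(u,v) \in G(f_{b^u})$. Otherwise, pick the least $i$ with $b^u_i \neq b^v_i$ and let $b \in \{0,1\}^{i-1}$ be the agreed prefix; then the restriction defining $g_{i,b}$ hard-wires exactly the values that the earlier negation outputs take on both $u$ and $v$, so $g_{i,b}(u) = b^u_i \neq b^v_i = g_{i,b}(v)$ and $(u,v) \in G(g_{i,b})$.

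The subtle point, and the reason to descend into the \emph{first} disagreeing negation, is that only for $j<i$ do the restricted constants $1-b_j$ agree with the actual outputs $\bar{g}_j(u) = \bar{g}_j(v)$ on both inputs; if we tried to use a later disagreement, the sub-circuit for $g_i$ would be evaluated under constants inconsistent with at least one of $u,v$, and the conclusion $g_{i,b}(u) = g_i(u)$ could fail. With coverage established, $F'$ meets all three requirements on the right-hand side of the displayed inequality, so $\size_t(f) = s \ge \min_{F'} \max_{f' \in F'} \size_{mon}(f')$, completing the proof.
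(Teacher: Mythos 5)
Your proposal is correct and follows essentially the same route as the cited Amano--Maruoka argument (the paper itself only quotes their Theorem 3.2 without reproving it): restrict the optimal $t$-negation circuit by hard-wiring negation-gate outputs to constants, take the $\sum_{i=1}^{t}2^{i-1}+2^t=2^{t+1}-1$ monotone functions given by the inputs to the negation gates under prefix restrictions together with the full restrictions of the output, and cover each boundary edge by descending to the first negation gate on which the two adjacent inputs disagree. This is also the same decomposition the paper reuses in Equation~(1) of its negations-to-orientation theorem, and your handling of the ``first disagreement'' consistency issue is exactly the point that makes the covering claim go through.
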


The size lower bound they derive crucially depends on the following
lemma which states that no circuit of ``small'' size can ``approximate''
clique in the sense that either it rejects all the ``good'' graphs or
accepts a huge fraction of ``bad'' graphs.

\begin{lemma}{~\cite[Theorem
    4.1]{amano2005superpolynomial}}
  Let $s_1,s_2$ be positive integers such that $64 \leq s_1 \leq s_2$
  and $s_1^{1/3}s_2 \leq \frac{m}{200}$. Suppose that $C$ is a monotone
  circuit and that the fraction of good graphs in $I(m,s_2)$ such that
  $C$ outputs $1$ is at least $h=h(s_2)$. Then at least one of the
  following holds:
  \begin{itemize}
  \item The number of gates in $C$ is at least $(h/2)2^{s^{1/3}/4}$.
  \item The fraction of bad graphs in $O(m,s_1)$ such that $C$ outputs
    $0$ is at most $2/s_1^{1/3}$.
  \end{itemize}
\end{lemma}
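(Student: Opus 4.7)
The plan is to invoke Razborov's approximation method, as refined by Alon--Boppana for the \Clique\ function. I would replace every gate of $C$ by an ``approximator'' drawn from a restricted class $\mathcal{A}$: each approximator is a bounded-size OR of ``clique indicators'' $\lceil X \rceil$, where $X$ is a set of at most $s_1^{1/3}$ vertices and $\lceil X \rceil(G)=1$ iff $X$ induces a clique in $G$. Inductively, the input literal $x_{ij}$ becomes $\lceil\set{i,j}\rceil$, and each internal AND/OR is replaced by an approximate AND/OR that stays within $\mathcal{A}$: for approximate OR, take the union of the two families of clique indicators and then ``pluck'' using the sunflower lemma whenever the family grows beyond a chosen threshold; for approximate AND, take pairwise unions of clique indicators from the two families (discarding any union of size greater than $s_1^{1/3}$) and pluck again if needed.

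First I would quantify the per-gate error. Over the uniform distribution on good graphs in $I(m,s_2)$, each OR-error and AND-error produced by these operations is bounded by computing the probability that a random good graph is affected by a single plucking step; the hypotheses $s_1^{1/3} s_2 \leq m/200$ and $s_1 \geq 64$ are precisely what is needed to make this per-gate error at most roughly $2^{-s_1^{1/3}/4}$ on good graphs. Over the distribution on bad graphs in $O(m,s_1)$ (uniformly random proper $(s_1-1)$-colorings), a small clique indicator $\lceil X \rceil$ fires only when the coloring assigns $|X|$ distinct colors to $X$, an event of probability at most $(1-1/(s_1-1))^{\binom{|X|}{2}}$; in particular, a bounded-size family in $\mathcal{A}$ can cover only a tiny fraction of bad graphs.

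Next I would perform the accounting. Assuming that $C$ accepts at least an $h$ fraction of good graphs, the root approximator $\tilde{C}$ satisfies
\[ \Pr_{G\in I(m,s_2)}[\tilde{C}(G)=1] \;\geq\; h \;-\; \size(C)\cdot(\text{good-graph error per gate}), \]
and a symmetric estimate on bad graphs compares $\Pr[C(G)=0]$ with $\Pr[\tilde{C}(G)=0]$. Split into two cases. If $\tilde{C}$ still accepts at least an $h/2$ fraction of good graphs, then the per-gate estimate above forces $\size(C)\cdot 2^{-s_1^{1/3}/4} \geq h/2$, giving the first alternative $\size(C) \geq (h/2)\cdot 2^{s_1^{1/3}/4}$. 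Otherwise, at least $h/2$ of the good-graph acceptances were absorbed into approximation error along the circuit; combining this with the bad-graph bookkeeping and the bound on how often any approximator in $\mathcal{A}$ accepts bad graphs forces the fraction of bad graphs in $O(m,s_1)$ rejected by $C$ to be at most $2/s_1^{1/3}$, which is the second alternative.

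The main obstacle is the parameter tuning inside the per-gate error analysis: choosing the sunflower plucking threshold and the maximum clique-indicator size $s_1^{1/3}$ so that both the good-graph error and the bad-graph error come out cleanly in the stated dichotomous form. This balancing is exactly what the quantitative hypotheses $64 \leq s_1 \leq s_2$ and $s_1^{1/3} s_2 \leq m/200$ are designed to support, and it is the technical heart of the Razborov--Alon--Boppana argument adapted in~\cite{amano2005superpolynomial}.
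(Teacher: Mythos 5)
The paper does not prove this lemma at all; it is stated verbatim as Theorem 4.1 of Amano and Maruoka~\cite{amano2005superpolynomial} and invoked as a black box. So there is no in-paper argument to compare against. Your proposal correctly identifies the framework behind the citation --- the Razborov/Alon--Boppana approximation method with clique indicators of size at most $s_1^{1/3}$, sunflower plucking, and per-gate error accounting on good graphs ($s_2$-cliques) versus bad graphs (random proper $(s_1-1)$-colorings) --- so the high-level route is the right one.

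However, the concluding case analysis is inverted, and both halves as you state them are non sequiturs. You split on whether $\tilde C$ retains an $h/2$ fraction of good-graph acceptances and assign that case to the first alternative (large size), and the loss case to the second. The inequality $\Pr[\tilde C=1]\ge h-\size(C)\cdot 2^{-s_1^{1/3}/4}$ on good graphs, together with $\Pr[\tilde C=1]\ge h/2$, gives \emph{no} lower bound on $\size(C)$ whatsoever; and if ``at least $h/2$ of the acceptances were absorbed into error,'' that says $\size(C)\cdot E_1\ge h/2$, i.e.\ $\size(C)\ge (h/2)2^{s_1^{1/3}/4}$, which is alternative one, not two. The correct dichotomy is on $\size(C)$ directly. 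If $\size(C)\ge (h/2)2^{s_1^{1/3}/4}$, alternative one holds and you are done. Otherwise the cumulative good-graph error is below $h/2$, so $\tilde C$ is not the identically-zero approximator and must contain some clique indicator $\lceil X\rceil$ with $|X|\le s_1^{1/3}$; such an indicator already accepts all but roughly a $\binom{|X|}{2}/(s_1-1)\le s_1^{-1/3}$ fraction of bad graphs, and charging the (false-positive) bad-graph error of each plucking step against the small value of $\size(C)$ yields that $C$ rejects at most a $2/s_1^{1/3}$ fraction of bad graphs --- alternative two. The branch where $\tilde C$ remains nontrivial is the one that produces the bad-graph conclusion, not the size bound.
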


where a ``good'' graph in $I(m,s_2)$ is a clique of size $s_2$ on $m$
vertices and no other edges and a ``bad'' graph in $O(m,s_1)$ is an $(s_1 -
1)$-partite graph where except for at most one partition the
partitions are balanced and of size $\lceil \frac{m}{s_1-1} \rceil$
each. 

\begin{lemma}
  \label{lemma:constantnegationsizebound}
  For any circuit $C$ computing $\Clique(m,m^{\frac{1}{6\alpha}})$ with
  $t$ negations with $t \leq 1/6 \log \log m$, size of
  $C$ is at least $2^{m^{\frac{1}{81 \alpha}}}$ where
  $\alpha=2^{t+1}-1$.
\end{lemma}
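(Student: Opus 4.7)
The plan is to reduce to a lower bound on a monotone cover via Lemma~\ref{lemma:boundarycovering}, and then apply the approximation estimate of \cite[Theorem 4.1]{amano2005superpolynomial} with parameters tuned to produce the exponent $m^{1/(81\alpha)}$, where $\alpha = 2^{t+1}-1$ and $k = m^{1/(6\alpha)}$.

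First I would invoke Lemma~\ref{lemma:boundarycovering} on any $t$-negation circuit $C$ computing $\Clique(m,k)$ to obtain a monotone cover $F' = \{f_1, \ldots, f_\alpha\}$ of the boundary graph $G(\Clique(m,k))$ with $\max_i \size_{mon}(f_i) \le \size_t(C)$. It then suffices to show that in every such cover some $f_i$ has monotone complexity at least $2^{m^{1/(81\alpha)}}$. To locate one, I would average over the boundary: every $k$-clique $G_*$ is a minterm of $\Clique(m,k)$, and for each edge $e$ of $G_*$ the pair $(G_*,\, G_* - e)$ is a boundary edge; some $f_i$ in the cover must separate the pair, and by monotonicity must accept $G_*$. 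Pigeonholing over the $|I(m,k)|\binom{k}{2}$ such pairs yields a cover function $f^*$ that accepts at least a $1/\alpha$-fraction of $I(m,k)$.

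Next I would apply the approximation estimate to $f^*$ with $s_2 = k = m^{1/(6\alpha)}$ and $s_1 = 64\,m^{1/(27\alpha)}$, chosen so that $s_1^{1/3}/4 \ge m^{1/(81\alpha)}$. The admissibility conditions $64 \le s_1 \le s_2$ and $s_1^{1/3} s_2 \le m/200$ follow routinely from the hypothesis $t \le \frac{1}{6}\log\log m$, which bounds $\alpha$ by $2\log^{1/3} m$ and keeps all the relevant exponents in range. With $h = 1/\alpha$, the size alternative of the estimate gives $\size_{mon}(f^*) \ge 2^{m^{1/(81\alpha)}}$ after absorbing the $1/(2\alpha)$ factor into the exponent. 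Otherwise $f^*$ accepts all but a $2/s_1^{1/3}$ fraction of the bad graphs in $O(m, s_1)$.

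The hard part will be disposing of this second alternative, since the bare covering condition constrains only boundary separations, not direct acceptance statistics on $O(m, s_1)$. I would handle it by a second averaging on the maxterm side of the boundary: because $s_1 - 1 < k$, every $(s_1-1)$-partite graph $B$ in $O(m, s_1)$ is rejected by $\Clique$, so pairs $(B, B+e)$ with $B+e$ containing a $k$-clique are boundary edges of $G(\Clique(m,k))$. Averaging these pairs across $F'$ produces another cover member $f^{**}$ that simultaneously accepts many cliques and rejects many bad graphs, forcing only the size alternative of the approximation estimate to remain and again yielding the $2^{m^{1/(81\alpha)}}$ lower bound. The main technical content is this two-sided averaging plus the parameter bookkeeping that confirms $t \le \frac{1}{6}\log\log m$ keeps $\alpha$ small enough for the $1/\alpha$-fraction hypothesis to survive on both sides.
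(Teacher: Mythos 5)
Your opening moves match the paper: invoke Lemma~\ref{lemma:boundarycovering} to get a monotone cover $F' = \{f_1,\dots,f_\alpha\}$ of the boundary graph, and apply the Amano--Maruoka approximation estimate (\cite[Theorem 4.1]{amano2005superpolynomial}). But your plan for ``disposing of the second alternative'' has a genuine gap, and it is not a bookkeeping issue: the boundary pairs you propose to average over on the maxterm side do not exist.

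You want pairs $(B, B+e)$ with $B \in O(m,s_1)$ and $B+e$ containing a $k$-clique, so that the pair is a boundary edge of $G(\Clique(m,k))$ and some $f_i$ must separate it. But $B$ is a complete $(s_1-1)$-partite graph, so its maximum clique has size $s_1 - 1$, and adding a single edge can only raise the clique number to $s_1$. With your parameters, $s_1 = 64\, m^{1/(27\alpha)} \ll m^{1/(6\alpha)} = k$, so $B+e$ never contains a $k$-clique and $(B, B+e)$ is never a boundary pair. The covering hypothesis simply says nothing about how the $f_i$'s behave on $O(m,s_1)$; a single round of pigeonholing over the cover cannot force any $f_i$ to reject many bad graphs.

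This is precisely the obstacle the Amano--Maruoka proof is designed around, and the paper follows it: one fixes a monotone \emph{chain of layers} $l_0 = s < l_1 < \cdots < l_\alpha = m$ with $s^{1/3} l_{i-1} \le l_i$, and runs an induction (\cite[Claim 5.3]{amano2005superpolynomial}) that at each step either detects a large $f_{i_j}$ via the size alternative of the approximation estimate, or passes a non-negligible fraction of bad graphs up to the \emph{next} layer where a different cover member must speak up. After $\alpha$ rounds all $\alpha$ functions accept a common bad graph, contradicting the cover. Your argument collapses this $\alpha$-round climb into a single two-sided averaging step, which is exactly what the combinatorics forbids: the good graphs at scale $k$ and the bad graphs at scale $s_1$ are Hamming distance $\gg 1$ apart, so they are not linked by a single boundary edge and cannot be simultaneously constrained by one appeal to the cover. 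The parameter choices you propose (e.g.\ $s_1 = 64\, m^{1/(27\alpha)}$ to get $s_1^{1/3}/4 = m^{1/(81\alpha)}$) are consistent with what the final bound requires, but the structural heart of the proof --- the layer induction --- is missing, and the shortcut you suggest in its place does not go through.
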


\begin{proof}
  The proof is similar to the proof (\cite[Theorem
  5.1]{amano2005superpolynomial}) by Amano and Maruoka except for change
  of parameters. Assume to the contrary that there is a circuit $C$
  with at most $t$ negations computing
  $\Clique(m,m^{\frac{1}{6\alpha}})$ with size $M$, $M <
  2^{m^{\frac{1}{81\alpha}}}$. By Lemma~\ref{lemma:boundarycovering}
  there are $\alpha \triangleq 2^{t+1}-1$ functions
  $f_1,\dots,f_\alpha$ of size at most $M$ (as they are obtained by
  restrictions of the circuit $C$) such that
  $\cup_{i=1}^{\alpha}G(f_i) \supseteq G(f)$. Let
  $s=m^{\frac{1}{6\alpha}}$ and let $l_0,l_1,\dots,l_\alpha$ be a
  monotonically increasing sequence of integers such that
  $l_0=s,l_\alpha=m$ and $l_i=m^{1/10+(i-1)/(3\alpha)}$. Note that
  $s^{1/3}l_{i} \leq l_{i+1}$ as
  $s^{1/3}l_{i}=m^{1/(18\alpha)+1/10+(i-1)/(3\alpha)}<m=^{1/10+(i)/(3\alpha)}=l_{i+1}$. Also
  $\left[ l_0=s=m^{\frac{1}{6\alpha}} \right] < \left[ l_1= m^{1/10}
  \right]$ as $\alpha =2^{t+1}-1 \geq 2^2-1$, $l_{\alpha-1} <
  m^{1/10+1/3} < m$. Thus, $l_0 < l_1 < \dots < l_i < l_{i+1} < \dots
  < l_\alpha$. The definition of ``bad'' graphs and ``good'' graphs at
  layer $l_i$ remains the same as in
  \cite{amano2005superpolynomial}. Note that \cite[Corollary
  5.2]{amano2005superpolynomial} is true for our choice of parameters
  as $s^{1/3}l_{i-1} \leq l_i$. Equations 5.1 to 5.3 of
  \cite{amano2005superpolynomial} is valid in our case also as these
  equations does not depend on the value of the parameters. The
  definition of a dense set remains the same, and $h \geq
  \frac{1}{\alpha} \geq \frac{1}{m}$ (as $m \geq \log m \geq \alpha$)
  is such that $(h/2)2^{s^{1/3}/4} \geq \frac{1}{m} 2^{m^{\frac{1}{18
        \alpha}}/4}$ is strictly greater than $M=2^{m^{\frac{1}{81
        \alpha}}}$. Hence Equation 5.4 of
  \cite{amano2005superpolynomial} is also true in our setting. Claim
  5.3 of \cite{amano2005superpolynomial} is independent of choice of
  parameters, hence is true in our setting also.
  \begin{claim}{~\cite[Claim
      5.3]{amano2005superpolynomial}}\label{claim:badgraphnextlayer}
    
    Suppose $c_1 > 1$ and $c_2 > 1$. Put $c_3 = \alpha$. Let
    $f_1,\dots,f_{c_3}$ be the monotone functions such that
    $\cup_{i=1}^{c_3}G(f_i) \supseteq G(\Clique(m,s))$ and
    $size_{mon}(f_i) \leq M$ for any $1 \leq i \leq c_3$. Suppose that
    for distinct indices $i_1,\dots, i_k \in [c_3]$,
    \begin{displaymath}
      \Pr_{L_k \in \mathcal{L}_k}\left[ \Pr_{u\in O_{L_k}}\left[
          f_{i_1}(u)=\cdots=f_{i_k}(u)=1\right] \geq \frac{1}{c_1}\right] \geq \frac{1}{c_2}
    \end{displaymath}
    holds. If $c_1c_2c_3 \leq s_1^{1/3}/8$, then there exists $i_{k+1}
    \in [c_3] \setminus \set{i_1,\dots,i_k}$ such that 
    \begin{displaymath}
      \Pr_{L_{k+1} \in \mathcal{L}_{k+1}}\left[ \Pr_{u\in O_{L_{k+1}}}\left[
          f_{i_1}(u)=\cdots=f_{i_k}(u)=1\right] \geq \frac{1}{4c_1c_2c_3}\right] \geq \frac{1}{2c_1c_2}
    \end{displaymath}

  \end{claim}

  Now for any $k \in [\alpha]$ there are $k$ distinct indices
  $i_1,\dots,i_k \in [\alpha]$ such that 
  \begin{eqnarray}
    \label{eqn:jumpingk}
    \Pr_{L_k \in \mathcal{L}_k}\left[ \Pr_{u\in O_{L_k}}\left[
        f_{i_1}(u)=\cdots=f_{i_k}(u)=1\right] \geq
      \frac{1}{2^{k^2(t+2)}}\right] & \geq & \frac{1}{2^{k(t+2)}}   
  \end{eqnarray}

  The proof is by induction on $k$. Base case is when $k=1$ and
  follows from Equation 5.4 of \cite{amano2005superpolynomial} which
  is established to be true in our setting also. Suppose the claim
  holds for $k \leq l$ and let $k=l+1$. From induction hypothesis we
  get that
  \begin{eqnarray}
    \Pr_{L_l \in \mathcal{L}_l}\left[ \Pr_{u\in O_{L_l}}\left[
        f_{i_1}(u)=\cdots=f_{i_l}(u)=1\right] \geq
      \frac{1}{2^{l^2(t+2)}}\right] & \geq & \frac{1}{2^{l(t+2)}}   
  \end{eqnarray}
  Like in \cite{amano2005superpolynomial} put $c_1=2^{l^2(t+2)},
  c_2=2^{l(t+2)}$  and $c_3 = \alpha$. Note that the bounds $4c_1c_2c_3
  \leq 2^{(l+1)^2(t+2)}$, $2c_1c_1 \leq 2^{(l+1)(t+2)}$ and
  $c_1c_2c_3 \leq 2^{2^{3t}}/8$ are valid in our setting also as
  they do not depend on values of these parameters. 
  Since $t \leq 1/6 \log \log m$, $2^{3t} \leq (\log m)^{1/3}$
  and $2^{2^{3t}} \leq 2^{(\log m)^{1/3}}$ whereas $s^{1/3}$ is
  $m^{\frac{1}{18 \alpha}}\geq 2^{(\log m)(\frac{1}{18 (\log
      m)^{1/6}})}=2^{(\log m)^{5/6}/18} >  2^{(\log
    m)^{1/3}}$. Hence $s^{1/3}/8 \geq 2^{2^{3t}}/8$. 
  . Thus Claim
  \ref{claim:badgraphnextlayer} applies giving us

  \begin{eqnarray}
    \Pr_{L_{l+1} \in \mathcal{L}_{l+1}}\left[ \Pr_{u\in O_{L_{l+1}}}\left[
        f_{i_1}(u)=\cdots=f_{i_{l+1}}(u)=1\right] \geq
      \frac{1}{2^{(l+1)^2(t+2)}}\right] & \geq & \frac{1}{2^{(l+1)(t+2)}}   
  \end{eqnarray}

  The proof of the main theorem is completed by noting that
  $\mathcal{L_\alpha}=\set{V}$ and setting $k$ in
  Equation~(\ref{eqn:jumpingk}) to $\alpha$ gives $Pr_{u\in
    O_V}[\forall i \in [\alpha], f_i(u)=1] > 0$. Thus there exists a
  bad graph $u$ belonging to $\Clique(m,s)^{-1}(0)$ on which all of
  $f_1,\dots,f_\alpha$ outputs $1$, and hence $(u,u^{+})$, where $u^{+}
  \in \Clique(m,s)^{-1}(1)$ is a graph obtained from $u$ by adding an
  edge, which is in $G(f)$ is not covered by any of the $G(f_i)$'s. A
  contradiction. Hence the proof.
  \qed
\end{proof}

Since for a bounded fan-in circuit size lower bound of
$2^{m^{\frac{1}{81 \alpha}}}$ implies a depth lower bound of
$m^{\frac{1}{81 \alpha}}$ we have,

\begin{lemma-a}{~\ref{lemma:depthnegtradeoff}}
  For any circuit $C$ computing $\Clique(m,m^{\frac{1}{6\alpha}})$
  with $\ell$ negations where $\ell \leq 1/6 \log \log m$ , where
  $\alpha=2^{\ell+1}-1$
  \begin{eqnarray*}
    \depth_{\ell}(f) & \geq & m^{\frac{1}{81 \alpha}}
  \end{eqnarray*}
\end{lemma-a}

%%% Local Variables: 
%%% mode: latex
%%% TeX-master: "main"
%%% End: 

\end{document}